\newtheorem{definition}{Definition} 
\newtheorem{proposition}{Proposition}
\newtheorem{lemma}{Lemma}
\newcommand{\pnt}[1]{{\mbox{\boldmath $#1$}}}
\newcommand{\init}{{\mbox{\boldmath $s_{\mi{init}}$}}}
\newcommand{\V}[1]{\mbox{$\mathit{Vars}(#1)$}}
\newcommand{\s}[1]{\mbox{$\{#1\}$}}
\newcommand{\nGz}[2]{$G_{non-\{z\}}$}
\newcommand{\mi}[1]{\mathit{#1}}
\newcommand{\ti}[1]{\textit{#1}}
\newcommand{\tb}[1]{\textbf{#1}}
\newcommand{\ttt}{\>\>\>}
\newcommand{\Tt}{\>\>}
\newcommand{\prob}[2]{\mbox{$\exists{#1} [#2]$}}
\newcommand{\Comment}[1]{}
\newcommand{\PQE}{\ti{DS-PQE}~}
\newcommand{\pqe}{\ti{DS-PQE}}
\newcommand{\inp}[2]{\mbox{$(\pnt{#1},\pnt{#2})$}}
\newcommand{\MC}{\mbox{$\mi{MC\_CRR}$}}
\newcommand{\mc}{\mbox{$\mi{MC\_CRR}$~}}
\newcommand{\Impl}[2]{\mbox{$#1 \rightarrow #2$}}
\newcommand{\Nmpl}[2]{\mbox{$#1 \not\rightarrow #2$}}
\newcommand{\CDI}{\ti{DCDS}}
\newcommand{\Cdi}{\ti{DCDS}~}
\newcommand{\ecnf}{\ensuremath{\exists\mathrm{CNF}}}
\newcommand{\IP}[2]{(\pnt{#1},\pnt{#2})}
\newcommand{\TR}[3]{\mbox{(\pnt{#1_{#2}},\dots,\pnt{#1_{#3}})}}
\newcommand{\tr}[4]{\mbox{$\IP{#1_{#3}}{#2_{#3}},\dots,\IP{#1_{#4}}{#2_{#4}}$}}
\newcommand{\Tr}[4]{\mbox{$\IP{#1_{#3}}{#2_{#3}},\dots,\IP{#1_{#4}}{#2_{#4}},\dots$}}
\newcommand{\rr}{\mbox{$\mathbb{H}^*$}}
\newcommand{\ur}{\mbox{$\mathbb{U}$}}
\newcommand{\vl}[1]{\mbox{$\mathit{Val}(\pnt{#1})$}}
\begin{document}

\title{Bug Hunting By Computing Range Reduction}

\author{\IEEEauthorblockN{Eugene Goldberg and Panagiotis Manolios} 
\IEEEauthorblockA{
College of Computer and Information Science\\
Northeastern University, USA, 
360 Huntington Ave., Boston MA 02115, USA \\
email:
 \{eigold,pete\}@ccs.neu.edu}}

\maketitle

\begin{abstract}
We describe a method of model checking called Computing Range
Reduction (CRR).  The CRR method is based on derivation of clauses
that reduce the set of traces of reachable states in such a way that
at least one counterexample remains (if any). These clauses are
derived by a technique called Partial Quantifier Elimination (PQE).
Given a number $n$, the CRR method finds a counterexample of length
less or equal to $n$ or proves that such a counterexample does not
exist\footnote{To make exposition simpler, in this paper, we formulate a version of
the CRR method that proves only that a property holds for $n$
transitions. However, the CRR method can be modified to become
complete and hence able to prove that a property holds for an
arbitrary number of transitions. We are planning to publish this
modification of the CRR method in the near future.
}.  We show experimentally that a
PQE-solver we developed earlier can be efficiently applied to
derivation of constraining clauses for transition relations of
realistic benchmarks.

One of the most appealing features of the CRR method is that it can
potentially find long counterexamples. This is the area where it can
beat model checkers computing reachable states (or their
approximations as in IC3) or SAT-based methods of bounded model
checking.  PQE cannot be efficiently simulated by a SAT-solver. This
is important because the current research in model checking is
dominated by SAT-based algorithms. The CRR method is a reminder
that one should not put all eggs in one basket.
\end{abstract}


\section{Introduction}
In this paper, we introduce a new method of model checking called
Computing Range Reduction (CRR).

\subsection{Motivating example}
Let $\xi$ be a state transition system. Let formula $I$ specify the
initial states of $\xi$ and formula $P$ specify a property that does
not hold for $\xi$.  Suppose that we know that there exists \ti{only
  one} sequence $D=\TR{s}{0}{n}$ of states from an initial state to a
bad state. Suppose that we know only the state \pnt{s_0} of this
sequence (that is an initial state of $\xi$) and want to find the
remaining states \pnt{s_1},\dots,\pnt{s_n}.  We assume here that a
\tb{counterexample} $E$ is a trace \tr{s}{x}{0}{n-1} where \pnt{x_i}
is a complete assignment to combinational input variables in $i$-th
time frame. As usual, we assume that $\xi$ transitions to state
\pnt{s_{i+1}} from \pnt{s_{i}} under assignment \pnt{x_i},
$i=0,\dots,n-1$. So trace $E$ leads to state \pnt{s_n}. A traditional
model checker cannot exploit the fact that every counterexample $E$
goes through the same sequence of states $D$. To find $E$, such a
model checker would have to build a sequence of sets of states
$A_1,\dots,A_n$ where $A_i$ is the set of states \ti{reachable} from
\pnt{s_0} in $i$ transitions or an over-approximation thereof. For the
sake of simplicity, below, we assume that $A_i$ is the precise set of
states reachable from \pnt{s_0} in $i$ transitions.

In reality, finding a counterexample $E$ does not require computing
sets $A_i$,$i=1,\dots,n$.  Let $B_i$ denote the set of states that are
reachable in $i$ transitions from initial states \ti{different} from
\pnt{s_0}. (We assume here that $I$ specifies more than one initial
state.)  Every state \pnt{s_i} of $D$ is in $A_i \setminus B_i$,
$i=1,\dots,n$. Indeed, \pnt{s_i} cannot be in $B_i$ because then $D$
would not be the only  sequence of states leading to a bad
state. Importantly, the size of the set $A_i \setminus B_i$ can be
dramatically smaller than $A_i$.
 
The CRR method is able to find a counterexample $E$ by generating only
sets of states $A_i \setminus B_i$.  Here is how it is done.  Instead
of finding the set of states reachable from state \pnt{s_0} in $i$
transitions, the CRR method builds the set of states that become
\ti{unreachable} in $i$ transitions if the state \pnt{s_0} is
\ti{excluded} from the set of initial states. It is not hard to see
that this is exactly the set $A_i \setminus B_i$ because the latter
consists of states that are reachable in $i$ transitions \ti{only}
from state \pnt{s_0}. Obviously, these states become unreachable if
\pnt{s_0} is excluded.  The fact that set $A_n \setminus B_n$ contains
a \ti{bad} state \pnt{s_n} means that \pnt{s_n} is reachable from
\pnt{s_0} in $n$ transitions. Hence property $P$ fails.

\subsection{Operation of the CRR method in more detail}
\label{subsec:intr_operation}
Let us use the example above to describe the operation of the CRR
method in more detail. Let $N$ be a circuit specifying the transition
relation of system $\xi$. Let $S$ and $X$ be the sets of state
variables and combinational input variables of $N$ respectively. So $S
\cup X$ is the set of input variables of $N$.  The key operation of
the CRR method is to \tb{exclude} some input assignments of the
initial time frame and compute the set of reachable states that become
unreachable due such an exclusion.

In our example, the set of excluded inputs is specified by clause $C$
that is falsified only by state \pnt{s_0}. This clause excludes every
assignment \IP{s_0}{x} where \pnt{x} is an arbitrary complete
assignment to $X$.  To compute the effect of constraining inputs of
the initial time frame by $C$, a set of \tb{range reduction formulas}
$H_1,\dots,H_n$ is constructed. (We assume that time frames are
indexed starting with 0. So the initial time frame has index 0.)
Formula $H_i$ evaluates to 0 for state \pnt{s} iff \pnt{s} is
reachable in $i$-transitions but becomes unreachable in
$i$-transitions after removing the traces excluded by $C$. In our
example, clause $C$ excludes every trace that starts with state
\pnt{s_0}. So, the set of states falsifying formula $H_i$ is equal to
$A_i \setminus B_i$.

The name ``range reduction formula'' is due to the fact that $H_i$
specifies the reduction of the range of a combinational circuit caused
by excluding its inputs by $C$. This circuit is a composition of $i$
copies $N$. Formulas $H_i$ are computed one by one.  Once formula
$H_i$ is formed, the CRR method checks if \Impl{\overline{H_i}}{P}
holds. If it does not, then there is a bad state \pnt{s_i} that
becomes unreachable in $i$ transitions after excluding state
\pnt{s_0}.  Hence \pnt{s_i} is reachable from \pnt{s_0} and $P$ does
not hold.  If \Impl{\overline{H_i}}{P} holds, then the CRR method
computes the next range reduction formula $H_{i+1}$.  This goes on
until a bad state falsifying the most recent range reduction formula
is found.

So far, we assumed that there exists only one sequence of states $D$
from an initial state to a bad state and this sequence specifies
counterexamples of length $n$. Suppose that this is not the case.
That is either property $P$ holds for $n$ transitions or for every $i$
less or equal to $n$, clause $C$ does not exclude all counterexamples
of length $i$ (if any).  Then, \Impl{\overline{H_i}}{P} holds for
every range reduction formula $H_i$. This means that excluding the
inputs of the initial time frame falsified by clause $C$ does not
affect the answer to the question whether $P$ holds for $n$
transitions. In this case, we will say that clause $C$ is
\pnt{P^n}\tb{-equivalent}.

Given a number $n$, the CRR method either finds a counterexample of
length at most $n$ or proves that $P$ holds for $n$ transitions. The
latter is done by generating $P^n$-equivalent clauses until one of the
two conditions below is met.
\begin{enumerate}
\item The set of all possible traces of $n$ transitions reduces to one
  trace consisting of only good states.
\item The set of all possible traces of $m$ transitions where $m \leq n$
reduces to one trace $L$ where
\begin{itemize}
\item all states of $L$ are good and
\item  the last  state of $L$ repeats some previous state of $L$
\end{itemize}
\end{enumerate}

\subsection{What sets CRR method apart from competition}
One of the most appealing features of the CRR method is that it can
potentially detect \tb{very deep bugs}.  Such bugs are hard to find by
the existing methods.  The complete methods based on computing
reachable states or their over-approximation work in a
\ti{breadth-first} manner.  That is they consider counterexamples of
length $n$ only \ti{after} they proved that no counterexample of
length $n-1$ or less exists.  This also applies to Bounded Model
Checking (BMC).  The breadth-first search strategy makes these methods
very inefficient in finding deep bugs.  As we mentioned above, when
the CRR method looks for a counterexample, it generates range
reduction formulas $H_1, \dots H_n$.  This means that the CRR method
looks for counterexamples in a \ti{depth-first} manner.  In
particular, the CRR-method can find a counterexample of length $n$
without proving that counterexamples of length less than $n$ do not
exist. This can be done efficiently because the CRR method computes
only a small subset of the set of states reachable in $i$ transitions
$i=1,\dots,n$.

\subsection{Partial quantifier elimination}

Computing a range reduction formula $H_i$ comes down to solving an
instance of the Partial Quantifier Elimination (PQE)
problem~\cite{tech_rep_pqe,south_korea}. In general, a PQE-solver
cannot be efficiently simulated by a SAT-solver. This is important
because the current research in model checking is dominated by
SAT-based approaches. The CRR method is a reminder that one should not
put all eggs in one basket.

In the experimental part of the paper, we give some results of
applying our PQE-algorithm~\cite{tech_rep_pqe} to constructing range
reduction formulas.  We compute such formulas for transition relations
of the HWMCC-10 benchmarks.  Our experiments show that even the
current version of the PQE algorithm that has huge room for
improvement can be successfully applied to computing range reduction
formulas.

\subsection{Structure of the paper}
This paper is structured as follows. In Section~\ref{sec:example}, we
present a simple example illustrating the operation of the CRR method.
We also discuss the advantages of the CRR method in finding deep
bugs. Section~\ref{sec:pqe} gives a brief introduction into partial
quantifier elimination. Basic definitions are given in
Section~\ref{sec:defs}. In Section~\ref{sec:main_idea}, we explain the
main idea of the CRR-method. Section~\ref{sec:isol_publ_traces}
introduces the important classification of traces as isolated or
public with respect to a constraining clause.  Application of CRR to
bug hunting is discussed in Section~\ref{sec:bug_hunting}.  In
Section~\ref{sec:p_equiv_clauses}, we explain how the CRR method
identifies $P^n$-equivalent clauses.  We compare the CRR method with
other model checkers in Section~\ref{sec:comparison}.
Section~\ref{sec:mc_crr} describes a model checker called \mc that is
based on the CRR method.  Experimental results are given in
Section~\ref{sec:experiments}. In Section~\ref{sec:conclusions}, we
make some conclusions.

\section{An Example Of How CRR Method Operates}
\label{sec:example}
In this section, we describe the operation of the CRR method when
checking a property of an \ti{abstract} $k$-bit counter.  An abstract
counter is a regular counter where no assumptions about the binary
encodings of numbers are made. In particular, a pair of consecutive
numbers can have completely different binary representations.

One can view an abstract counter as describing a sub-behavior of a
sequential circuit going through a long sequence of states $K$ where
all states of $K$ are unique. The counter has a combinational variable
$x$ whose value specifies whether this counter stays in the current
state or moves to the next state of $K$. Since an abstract counter is
meant to simulate a long sequence of unique states of an arbitrary
sequential circuit, it is reasonable to avoid making any assumptions
about the way states are encoded.

Subsection~\ref{subsec:exmp_descr} describes the example with an
abstract counter in more detail.  Application of the CRR method to
this example is described in Subsection~\ref{subsec:appl_crr_method}.
Subsection~\ref{subsec:deep_bugs} uses an abstract counter to show the
advantage of the CRR method over existing methods in finding deep
bugs.

\subsection{Problem description}
\label{subsec:exmp_descr}
An abstract $k$-bit counter is specified by a sequential circuit $\xi$
defined as follows.  Let $S=\s{s_1,\dots,s_k}$ specify the set of
state variables of $\xi$ and $x$ be the only combinational input
variable of $\xi$. We will assume that $\xi$ has only one initial
state where $s_i=0$,$i=1,\dots,k$. We will denote the initial state as
\init. Let \vl{s} denote the number stored by the counter in state
\pnt{s}. As we mentioned above, we do not assume any relation between
\vl{s} and \pnt{s}. Our only constraints are that $\pnt{s} \neq
\pnt{s^*}$ implies $\vl{s} \neq \vl{s^*}$ and that \vl{\init} = 0.

 The transition relation of $\xi$ is specified as follows.  Let
 \pnt{s} be the current state of $\xi$.
\begin{itemize}
\item If $x=0$, $\xi$ remains in state \pnt{s}.
\item If $x=1$,
\begin{itemize}
\item if $\vl{s} \neq 2^k-1$,  $\xi$  switches to state \pnt{s'}
 such that \vl{s'}~:= \vl{s} + 1.
\item if $\vl{s} = 2^k-1$, $\xi$ resets i.e. switches to \init.
\end{itemize}
\end{itemize}

Let $P(S)$ be a formula such that $P(\pnt{s})=1$ iff $\vl{s} < d$.
The problem we want to solve is to check if $\xi$ satisfies property
$P$ for $n$ transitions. To prevent resetting the counter, we will
assume that $n < 2^k-1$. Since \vl{\init}=0 and one transition
increases the value corresponding to the current state by at most 1,
$P$ holds if $n < d$ and fails otherwise.
\subsection{Application of the CRR method}
\label{subsec:appl_crr_method}
Here is how the problem above is solved by the CRR method.  Let $C=s_1
\vee \dots \vee s_k \vee \overline{x}$ be a clause selected by the
CRR method to constrain input assignments of the initial time frame.
Namely, $C$ removes every input assignment in which $s_i = 0,
i=1,\dots,k$ and $x=1$. Let $H_1,H_2, \dots,$ be range reduction
formulas computed with respect to clause $C$.  We will say that $H_i$
\tb{excludes} state \pnt{s} from $i$-th time frame if
$H_i(\pnt{s})=0$. 

If inputs of the initial time frame are not constrained by $C$, the
set of states of the counter reachable in $i$ transitions consists of
the $i+1$ states with values ranging from 0 to $i$. If $E$ is a trace
of $i$ transitions and $x$ is equal to 1 in $m$ time frames and equal
to 0 in $i-m$ time frames, the counter reaches a state \pnt{s} where
\vl{s} = $m$. If inputs of the initial time frame are constrained by
$C$, variable $x$ cannot have value 1 in the initial time frame. So
traces where $x$ is equal to 1 in \ti{every} time frame are
excluded. This means that the state \pnt{s} where \vl{s}=$i$ is
excluded by $H_i$. Note that every state \pnt{s} such that $0 \leq
\vl{s} < i$ is reachable in $i$ transitions by a trace where $x=0$ in
the initial time frame i.e. by an \ti{allowed} trace. Hence such a
state cannot be excluded by $H_i$ and so $H_i(\pnt{s})=1$.

Suppose that $n \geq d$. For every new range reduction formula $H_i$,
the CRR method checks if \Impl{\overline{H_i}}{P} holds. For the first
$d-1$ formulas $H_1,\dots,H_{d-1}$, this implication holds and so no
bad state is excluded. However, since $H_d$ excludes a state \pnt{s}
such that $\vl{s}=d$ and hence $P(\pnt{s})$=0, implication
\Impl{\overline{H_d}}{P} does not hold. At this point, the CRR method
reports that $P$ is broken by a trace of $d$ transitions.

Now, assume that $n < d$. Then \Impl{\overline{H_i}}{P} holds for all
formulas $H_1,\dots,H_n$. This means that formula $C$ is
$P^n$-equivalent.  That is constraining the inputs of the initial time
frame of $\xi$ with $C$ does not affect the answer to the question
whether $P$ holds for $n$ transitions. In general, one needs to add
many $P^n$-equivalent clauses to prove that a property holds for
system $\xi$ for $n$ transitions.  However, for our example, showing
that $C$ is $P^n$ equivalent is sufficient to finish the job.  Note
that only state \init~is possible in the initial time frame.  Due to
clause $C$, the value of $x$ in the initial time frame is fixed at
0. So only state \init~is possible in the next time frame that is the
same state as in the previous time frame.

At this point the CRR method stops to declare that $P$ holds for $n$
transitions.  In Subsection~\ref{subsec:intr_operation}, we gave two
conditions under which the CRR method claims that a property holds for
$n$ transitions. Our example employs the second condition. The set of
all traces of $m$ transitions where $m \leq n$ reduces to one trace
$L$ where the last state repeats a previously seen state of $L$.  In
our example, $L$ consists of two copies of state \init~and $m$ is
equal to 1.

\subsection{Comparison of the CRR method with other model checkers}
\label{subsec:deep_bugs}
In this subsection, we use our example to discuss the advantage of the
CRR method over other model checkers in the context of bug hunting.
To be concrete, let us assume that $d$=20,000 and one needs to check
if the property $P$ above holds for some $n$. We will assume that $n >
d$ and so $P$ does not hold.

To find a counterexample by BMC, one will have to generate formulas
$G_1,\dots,G_{20000}$ where satisfiability of $G_i$ means the
existence of a counterexample of $i$ transitions. Formula $G_i$
contains $i$ copies of the transition relation. So even if $\xi$ is
small, formulas $G_i$ grow too large to be solved efficiently by a
SAT-algorithm.

A model checker computing the set of states reachable in $k$
transitions $k=1,\dots,n$ or its over-approximation will have a
different kind of a problem.  Before searching for a counterexample of
20,000 transitions, such a model checker will have to prove that no
counterexample of at most 19,999 transitions exists. This requires
computing 19,999 sets of reachable states or their
over-approximations.

The computation above can be done efficiently only for particular
binary encodings of the values of the counter. Consider, for instance,
the usual binary encoding where the more significant a state bit is
the less frequently it toggles when the counter switches from the
current state to the next one.  In this case, there is a natural
ordering of state variables for which the set of states of the counter
reachable in $k$ transitions can be represented by a compact BDD. So a
BDD- based model checker will have no problem with finding a
counterexample. 

An IC3-like model checker that builds over-approximations of the set
of reachable states will also benefit of the encoding above. A key
operation of IC3 is to compute an inductive clause. To make this
computation efficient, state encoding should satisfy the following
property.  If there is a transition from state \pnt{s} to state
\pnt{s'}, the Hamming distance between \pnt{s} and \pnt{s'} should be
small. In the majority of transitions, the encoding above satisfies
this property. So, most likely, an IC3-like model checker will find a
counterexample efficiently.

As we mentioned above an abstract counter is meant to simulate a
sub-behavior of a sequential circuit, so, in general, no assumptions
about state encoding can be made. In this case, the size of a BDD
representing the set of states reachable in $k$ transitions can be
large no matter how variables are ordered. So finding a counterexample
by a BDD based model checker becomes inefficient. The same applies to an
IC3-like model checker. The reason is that generation of inductive
clauses becomes inefficient.

As we showed above, in our example, the CRR method builds range
reduction formulas $H_i$, $i=1,\dots,k$ that exclude only \ti{one}
state from $i$-th time frame.  That is to reach a bad state, the CRR
method needs to compute only one state per time frame as opposed to
computing the set of \ti{all} states reachable in $i$ transitions or
its over-approximation. For that reason, for our example, the CRR
method has very weak dependence on state encoding (if any). So,
arguably, it will be able to find a counterexample in cases where
other model checkers will fail.

\section{Partial Quantifier Elimination}
\label{sec:pqe}
In this section, we recall Quantifier Elimination (QE) and Partial QE
(PQE) the latter being a key operation of the CRR method. This section
is structured as follows. Subsection~\ref{subsec:qe_pqe_defs} defines
the QE and PQE problems. We introduce the notion of a noise-free
PQE-solver in Subsection~\ref{subsec:noise_free_alg}. This notion
plays an important role in reasoning about range reduction formulas
that we introduce in Section~\ref{sec:main_idea}. In
Subsection~\ref{subsec:qe_pqe_range}, we show that computing the range
of a circuit or reduction of the circuit range caused by input
constraints come down to QE and PQE respectively.

%
%
\subsection{Quantifier elimination and partial quantifier elimination}
\label{subsec:qe_pqe_defs}
Let $G(X,Y)$ be a CNF formula.  We will call formula \prob{X}{G} an
{\boldmath $\exists\mi{CNF}$}.  The problem of \tb{Quantifier
  Elimination} (\tb{QE}) is to find a quantifier-free formula $H(Y)$
such that $H \equiv \prob{X}{G}$.

Let \prob{X}{F(X,Y) \wedge G(X,Y)} be an \ecnf{}. The problem of
\tb{Partial QE} (\tb{PQE}) is to find a quantifier-free formula
$F^*(Y)$ such that $F^* \wedge \prob{X}{G} \equiv \prob{X}{F \wedge
  G}$.  We will say that formula $F^*$ is obtained by taking $F$ out
of the scope of quantifiers in \prob{X}{F \wedge G}.

An obvious difference between PQE from QE is that the latter takes
\ti{the entire formula} $F \wedge G$ out of the scope of
quantifiers. Importantly, PQE can be dramatically simpler than QE
especially if formula $F$ is much simpler than $G$. In
Section~\ref{sec:bug_hunting} we show that computing range reduction
formulas comes down to an instance of the PQE problem. In this instance,
PQE is drastically simpler than QE because only a small part of the
formula is taken out of the scope of quantifiers.

%
%
\subsection{Noise-free PQE-solver}
\label{subsec:noise_free_alg}
Let $F^*(Y)$ be a solution to the PQE problem i.e.  $F^* \wedge
\prob{X}{G} \equiv \prob{X}{F \wedge G}$. Recall that $Y$ denotes the
set of free variables of \prob{X}{F \wedge G}. Let $C$ be a clause of
$F^*$ that is implied by $G$.  Then formula $F^* \setminus \s{C}$ is
also a solution to the same PQE problem.  That is $F^{**} \wedge
\prob{X}{G} \equiv \prob{X}{F \wedge G}$ where $F^{**} = F \setminus
\s{C}$.  One can think of clauses of $F^*$ implied by $G$ as
``noise''.  

Suppose that a clause $C$ of $F^*$ is not implied by $G$ but by adding
literals of variables from $Y \setminus \V{C}$ clause $C$ can be
extended to a clause implied by $G$. This can also be viewed as the
presence of some noise in $C$. We will say that a clause of $F^*$ is
\tb{noise-free} if the extension above does not exist.  We will call
$F^*$ \tb{a noise-free solution} if every clause of $F^*$ is
noise-free. We will call a PQE algorithm \tb{noise-free}
if it produces only noise-free solutions. A clause $C$, a solution $F^*$
and a PQE-algorithm that are not noise-free are called \tb{noisy}.

%
%
\subsection{Relation of QE and PQE to computing range of a circuit}
\label{subsec:qe_pqe_range}
Let $M(X,Y,Z)$ be a multi-output combinational circuit where $X,Y$ and
$Z$ specify input, intermediate and output variables of $M$
respectively.  In this subsection, we discuss QE and PQE in the context
of computing the range of $M$. Namely, we show that a) computing the
range of $M$ comes down to QE; b) PQE can be used to compute range
reduction caused by constraining inputs of $M$. 

In the two propositions below, we assume that $G(X,Y,Z)$ is a CNF
formula specifying circuit $M$ that is obtained by Tseitsin
transformations.
%
%
\begin{proposition}
\label{prop:full_range}
Let $R(Z)$ be a CNF formula such that $R \equiv \prob{W}{G}$ where
$W = X \cup Y$. (That is $R$ is a solution to the QE problem.)
Then the assignments satisfying $R(z)$ specify the range of $M$.
\end{proposition}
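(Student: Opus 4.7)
The plan is to unfold the equivalence $R \equiv \prob{W}{G}$ on both sides and then invoke the defining property of the Tseitsin encoding $G$ of $M$ to match the satisfying assignments of $G$ with the input/output pairs of the circuit.

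First I would restate the two quantities under consideration. On one hand, an output vector $\pnt{z}$ lies in the range of $M$ iff there exists an input assignment $\pnt{x}$ to $X$ such that evaluating $M$ on $\pnt{x}$ produces $\pnt{z}$. On the other hand, $R(\pnt{z})=1$ iff $\prob{W}{G}$ holds at $\pnt{z}$, i.e., iff there exist $\pnt{x}$ and $\pnt{y}$ with $G(\pnt{x},\pnt{y},\pnt{z})=1$. So the goal is to show that these two conditions on $\pnt{z}$ coincide.

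The next step is to invoke the standard semantic property of Tseitsin transformation: for every input $\pnt{x}$ there is a unique $(\pnt{y},\pnt{z})$ with $G(\pnt{x},\pnt{y},\pnt{z})=1$, namely the one where $\pnt{y}$ is the forced valuation of the intermediate gates of $M$ on input $\pnt{x}$ and $\pnt{z}$ equals the output $M(\pnt{x})$. From this I would prove the two inclusions. If $\pnt{z}$ lies in the range of $M$, pick a witnessing input $\pnt{x}$ and the determined $\pnt{y}$; this triple satisfies $G$, so $\prob{W}{G}(\pnt{z})=1$ and hence $R(\pnt{z})=1$. Conversely, if $R(\pnt{z})=1$ then some $(\pnt{x},\pnt{y})$ extends $\pnt{z}$ to a satisfying assignment of $G$, and by the Tseitsin property $\pnt{z}=M(\pnt{x})$, so $\pnt{z}$ is in the range of $M$.

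There is no real obstacle here: the statement is essentially a restatement of the semantics of existential quantification, combined with the semantic faithfulness of the Tseitsin encoding. The only subtle point worth writing out carefully is that the quantifier ranges over $W = X \cup Y$ rather than just $X$; what prevents spurious $\pnt{z}$'s from being captured by $R$ is precisely the fact that, under $G$, the intermediate variables $Y$ are functionally determined by $X$, so existentially quantifying them away adds no new output vectors beyond those actually produced by $M$.
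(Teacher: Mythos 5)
Your proof is correct and follows essentially the same route as the paper's: both directions reduce to observing that $\pnt{z}$ is in the range of $M$ iff some $(\pnt{x},\pnt{y},\pnt{z})$ satisfies $G$, and then reading off the value of $\prob{W}{G}$ at $\pnt{z}$. You merely make explicit the Tseitsin faithfulness property that the paper's proof uses implicitly.
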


The proofs of all propositions are given in the appendix.

%
%
\begin{proposition}
\label{prop:range_red}
Let $C(X)$ be a clause depending only on input variables of $M$.  Let
$H(Z)$ be a CNF formula such that $H \wedge \prob{W}{G} \equiv
\prob{W}{C \wedge G}$ where $W = X \cup Y$. (That is $H$ is a solution
to the PQE problem.)  Let $H$ and $H^*$  be  a noise-free and noisy 
solution respectively.  Then
\begin{enumerate}
\item The assignments \ti{falsifying} $H$ specify the range reduction
  in $M$ caused by excluding inputs falsifying $C$. That is $H(\pnt{z}) = 0$
  iff 
\begin{itemize}
   \item there is an input \pnt{x} for which circuit $M$ produces  output \pnt{z} 
   \item all inputs for which $M$ produces output \pnt{z} falsify $C$ 
\end{itemize}
\item \Impl{H^*}{H}
\end{enumerate}
\end{proposition}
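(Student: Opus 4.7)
The plan is to analyze $H(\pnt{z})$ by splitting on whether $\pnt{z}$ lies in the range of $M$, and if so, whether it is producible by an input satisfying $C$. Using the same reasoning as in Proposition~\ref{prop:full_range}, $\prob{W}{G}(\pnt{z})=1$ iff $\pnt{z}$ is in the range of $M$, and $\prob{W}{C \wedge G}(\pnt{z})=1$ iff $\pnt{z}$ is produced by some input satisfying $C$. The defining PQE equivalence $H \wedge \prob{W}{G} \equiv \prob{W}{C \wedge G}$ therefore \emph{pins down} the value of $H(\pnt{z})$ whenever $\prob{W}{G}(\pnt{z})=1$: the equation reads $H(\pnt{z}) \wedge 1 = \prob{W}{C \wedge G}(\pnt{z})$, so $H(\pnt{z})=1$ exactly when some producing input satisfies $C$, and $H(\pnt{z})=0$ exactly when every producing input falsifies $C$. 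This gives both directions of the iff in Part 1 for all in-range $\pnt{z}$. The only residual freedom is the out-of-range case, where the equation collapses to $0=0$ and leaves $H(\pnt{z})$ undetermined \emph{a priori}.

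The main work in Part 1 is to use noise-freeness to rule out $H(\pnt{z})=0$ in this residual case. I argue by contradiction: if $\pnt{z}$ is outside the range yet falsifies some clause $D$ of $H$, then every literal of $D$ is over $Z$ and is falsified by $\pnt{z}$, so $D$ is a sub-clause of the clause $\overline{\pnt{z}}$ obtained by negating every literal of the cube of $\pnt{z}$. Out-of-range-ness means $G \wedge \pnt{z}$ is unsatisfiable, i.e., $G$ entails the clause $\overline{\pnt{z}}$. Extending $D$ by the missing literals of $\overline{\pnt{z}}$ only adds literals over variables in $Z \setminus \V{D}$, and the resulting clause is implied by $G$. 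This is precisely the ``noise'' forbidden by hypothesis, contradicting the noise-freeness of $D$. The one delicate point — really the main obstacle — is checking that the added literals lie in $Z \setminus \V{D}$, which matches the definition of noise-free exactly because $Z$ is the set of free variables of the underlying \ecnf{}.

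Part 2 then follows almost for free. Given $H^*(\pnt{z})=1$, if $\pnt{z}$ is outside the range of $M$ then Part 1 already yields $H(\pnt{z})=1$. Otherwise $\prob{W}{G}(\pnt{z})=1$, and evaluating both defining equivalences (for $H^*$ and for $H$) at $\pnt{z}$ reduces each side to $\prob{W}{C \wedge G}(\pnt{z})$, so $H^*(\pnt{z}) = H(\pnt{z}) = 1$. In every case $H^*(\pnt{z})=1$ implies $H(\pnt{z})=1$, which is \Impl{H^*}{H}.
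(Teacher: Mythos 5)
Your proof is correct and follows essentially the same route as the paper: the PQE equivalence pins down $H(\pnt{z})$ for in-range \pnt{z}, noise-freeness forces $H(\pnt{z})=1$ for out-of-range \pnt{z}, and Part 2 follows by combining the two cases. The only difference is that you spell out the clause-extension argument behind the noise-freeness step (extending a falsified clause $D$ of $H$ to the clause $\overline{\pnt{z}}$ implied by $G$), which the paper's proof merely asserts; this is a welcome elaboration, not a different approach.
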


\section{Notation and definitions}
\label{sec:defs}
Let $\xi$ be a state transition system with a transition relation
specified by a combinational circuit $N(S,X,Y,S')$. Here $S$ and
$S'$ are sets of present and next state variables, $X$ is the set of
combinational input variables, and $Y$ is the set of internal
combinational variables. Then $S \cup X$ (respectively $S'$) specify
the input variables (respectively output variables) of $N$. 
Let $T(S,X,Y,S')$ be a formula specifying $N$. Let $P(S)$ be a
property of $\xi$ and $I(S)$ be a formula specifying the set of
initial states of $\xi$. For the sake of simplicity, in the following
exposition we \ti{omit mentioning the variables of} $Y$. 
%
%
\begin{definition}
A complete assignment (\pnt{s},\pnt{x}) to variables of $(S,X)$ is
called an \tb{input pair}. We will refer to \pnt{s} (respectively
\pnt{x}) as a \tb{state} (respectively \pnt{X}\tb{-input}).  A
sequence \tr{s}{x}{0}{k} of input pairs is called a \tb{trace} of
$\xi$ if T(\pnt{s_i},\pnt{x},\pnt{s_{i+1}})=1, $0 \le i < k$.  If
$I(\pnt{s_0})=1$, this trace is called \tb{initialized}.
\end{definition}

%
%
\begin{definition}
Let $E=\tr{s}{x}{0}{k}$ be a trace. Let \pnt{s_{k+1}} be the state to
which $\xi$ transitions under input pair \IP{s_k}{x_k}. We will call
\pnt{s_{k+1}} \tb{the state reachable by trace} \pnt{E}. We will also
say that \pnt{s_{k+1}} is reachable in $k+1$ transitions.
\end{definition}
%
%
\begin{definition}
Given a property $P(S)$ of system $\xi$, a state \pnt{s} is called
\tb{good} (respectively \tb{bad}) if $P(\pnt{s})=1$ (respectively
$P(\pnt{s})=0$).  Property $P$ is false for $\xi$ if there is an
initialized trace $E=\tr{s}{x}{0}{k}$
such that 
\begin{itemize}
\item every state \pnt{s_i} of $E$ is good $i=0,\dots,k$
\item the state \pnt{s_{k+1}} reachable by $E$ is bad
\end{itemize}
Trace $E$ is called a \tb{counterexample}. 
\end{definition}

%
%
\begin{definition}
We will index variables of system $\xi$ to distinguish between different time
frames. We will assume that numbering of time frames starts with 0.
We will refer to the time frame with index 0 as 
or the \tb{initial time frame}.

\end{definition}

%
%
\begin{definition}
\label{def:excl_traces}
Let $H(S,X)$ be a CNF formula that constrains the input pairs of the
system $\xi$ in the initial time frame. That is $H$ excludes every
initialized trace \Tr{s}{x}{0}{k} in which $(\pnt{s_0},\pnt{x_0})$
falsifies $H$. We will refer to such traces as \tb{excluded by
  formula} \pnt{H}. If the input pair $(\pnt{s_0},\pnt{x_0})$ of an
initialized trace satisfies $H$, this trace is said to be \tb{allowed}
by $H$.
\end{definition}

%
%
\begin{definition}
Let $P$ be a property of system $\xi$. Let $C(S,X)$ be a clause
excluding input pairs of the initial time frame. Suppose that $P$
holds for system $\xi$ for $n$ transitions iff the set of traces
\ti{allowed} by clause $C$ contains a counterexample of length at most
$n$. We will say that the system constrained by $C$ is
\pnt{P^n}\tb{-equivalent} to the original system $\xi$. Informally,
$P^n$-equivalence means that discarding the traces of $\xi$ excluded
by $C$ does not eliminate \ti{all} counterexamples of length at most
$n$ (if any). We will call clause $C$ preserving $P^n$-equivalence of
system $\xi$ a \pnt{P^n}\tb{-equivalent clause}.
\end{definition}

\section{Model Checking By CRR}
\label{sec:main_idea}
In this section, we give an introduction into model checking by
Computing Range Reduction (CRR).  First, we outline the main idea in
Subsection~\ref{subsec:main_idea}. Then, in
Subsection~\ref{subsec:compl_alg}, we give a high-level 
description of a model checker based on CRR.
%
%
\subsection{Main idea}
\label{subsec:main_idea}
Let $\xi$ be a system introduced in Section~\ref{sec:defs} and $P$ be
a property of $\xi$.  We will assume that \Impl{I}{P}, that is all
initial states satisfy property $P$.  Let $C(S,X)$ be a clause
specified in terms of input variables of circuit $N$ above such that
\Impl{\overline{C}}{I}. Suppose that we use $C$ to exclude traces as
described in Definition~\ref{def:excl_traces}.  Suppose that a state
\pnt{s} of $\xi$ is reachable in $i$ transitions only by traces
excluded by $C$. This means that if one discards the traces excluded
by $C$, state \pnt{s} becomes unreachable in $i$ transitions. Such
states are specified by range reduction formulas defined below.

%
%
\begin{definition}
\label{def:rr_formulas}
The result of using clause $C$ to exclude traces of $\xi$ of length at
most $n$ can be characterized by a set of formulas $H_1,\dots,H_n$
defined as follows. The value of $H_i(\pnt{s})$ is equal to 0 iff
\begin{itemize}
\item \pnt{s} is reachable in $i$ transitions
\item all traces of length $i$ that reach \pnt{s} are excluded
  by $C$.
\end{itemize}
We will call $H_i$ a \tb{range reduction formula}.  We will say that
state \pnt{s} \tb{is excluded by} \pnt{H_i} if $H_i(\pnt{s})=0$.
\end{definition}

Model checking by Computing Range Reduction (CRR) is based on
the following four observations. The first observation is that formula
$H_i$ specifies a \ti{reduction in the range} of a circuit obtained by
the composition of $i$ circuits $N$. Such a circuit describes
the traces  of $i$ transitions. The change of range described by $H_i$
is caused by discarding traces excluded by clause $C$.  Using
Proposition~\ref{prop:range_red}, one can compute such range
reductions by a PQE solver.

The second observation is that one can use range reduction formulas to
find a counterexample. Suppose that \Nmpl{\overline{H_i}}{P} i.e.
there is a state \pnt{s} such that $H_i(\pnt{s})=0$ and
$P(\pnt{s})=0$. This means that by discarding the traces excluded by
clause $C$, one excludes a \ti{bad} state \pnt{s} from the set of
states reachable in $i$ transitions. This implies that there is a
counterexample formed by a trace \tr{s}{x}{0}{i} excluded by $C$
leading to a bad state.  This trace can be easily recovered from
$H_1,\dots,H_i$ and clause $C$ by $i+1$ SAT-checks. In more detail,
bug hunting by CRR is described in Section~\ref{sec:bug_hunting}.

The third observation is as follows. As mentioned in the introduction,
formula $H_i$ specifies the difference between sets $A_i$ and
$B_i$. Set $A_i$ consists of the states that can be reached by traces
of length $i$ that are excluded by $C$. Set $B_i$ is a subset of $A_i$
that consists of the states that are also reachable by traces of
length $i$ that are \ti{allowed} by $C$. The set $A_i \setminus B_i$
represented by $H_i$ consists of the states that can be reached
\ti{only} by traces of length $i$ excluded by $C$.  This set can be
\ti{very small} even when sets $A_i$ and $B_i$ are huge. In
Section~\ref{sec:experiments}, we give some experimental evidence to
support this conjecture. Informally, this means that a model checker
based on CRR can find a bug by examining a very small number of
states.

The fourth observation is that one may not need to compute all $n$
range reduction formulas $H_i$ to prove that clause $C$ is
$P^n$-equivalent. Suppose, for example, that formula $H_i$ is empty
where $i < n$. That is $H_i \equiv 1$ (and hence $H_i$ cannot exclude a
bad state).  Then every formula $H_j$, $i < j \leq n$ is also empty.

%
%
\subsection{High-level description of a model checker based on CRR}
\label{subsec:compl_alg}
In this subsection, we give a high-level explanation of how one can
build a model checker based on CRR that checks if a property $P$
holds for $n$ transitions. A detailed description of an instance of
such a model checker is given in Section~\ref{sec:mc_crr}.

%
%
\begin{definition}
Let \IP{s}{x} be an input pair where \pnt{s} is an initial
state. Then we will call this pair an \tb{initial input pair}.
\end{definition}

Suppose that one excludes the initial input pairs of $\xi$ as follows.
First, an initial input pair (\pnt{s},\pnt{x}) is picked. Then a
clause $C(S,X)$ falsified by (\pnt{s},\pnt{x}) is generated such that
\Impl{\overline{C}}{I}. After that, range reduction formulas
$H_1,\dots,H_n$ are computed with respect to clause $C$.  If a formula
$H_i$ does not imply $P$ i.e. $H_i$ excludes a bad state, a
counterexample is generated. Otherwise, one proves that $C$ is a
$P^n$-equivalent clause. After that, $C$ is added to a formula $Q$
that accumulates all $P^n$-equivalent clauses generated so far to
exclude initial input pairs. Initially, $Q$ is empty.

Then one picks an initial input pair (\pnt{s},\pnt{x}) that satisfies
$Q$. This guarantees that this a new initial input pair. A new clause
$C$ falsified by this input pair is generated and a new set of range
reduction formulas is generated with respect to clause $C$.  The
process of elimination of initial input pairs goes on until either a
counterexample is generated or all initial input pairs but one are
excluded.  The reason why the last initial input pair is not excluded
is as follows. In Subsection~\ref{subsec:main_idea} we mentioned that
$H_i$ represents the difference of sets $A_i$ and $B_i$ where $B_i$ is
a subset of $A_i$.  The larger the set $B_i$, the smaller the set $A_i
\setminus B_i$ that $H_i$ represents. The size of the set $B_i$
depends on the number of traces of length $i$ that are \ti{allowed} by
clause $C$.  If the last initial input pair is eliminated by a clause
$C$, then \ti{no trace of length} $i$ is allowed by $C$. In this case,
the set $B_i$ is empty and the CRR method essentially reduces to
reachability analysis where set $A_i$ grows uncontrollably.

Let \IP{e_0}{d_0} be the initial input pair that still satisfies $Q$.
This means that every remaining counterexample (if any) starts with
the input pair \IP{e_0}{d_0}. Let \pnt{e_1} denote the state to which
$\xi$ transitions to under input \IP{e_0}{d_0}. Obviously, the traces
of $\xi$ allowed by $Q$ go through state \pnt{e_1}.  This means that
the original system $\xi$ with initial states specified by formula $I$
is $P^n$-equivalent with $\xi$ that has only one initial state equal to
\pnt{e_1}. This also means that the initial time frame can be
discarded.

One can use the same procedure of building formula $Q$ that excludes
the initial input pairs of the modified $\xi$. This initial input
pairs are of the form \IP{e_1}{x} where \pnt{x} is a complete
assignment to variables of $X$.  The procedure described above can be
used to eliminate all initial input pairs but an input pair
\IP{e_1}{d_1}.  This means that every remaining counterexample of the
original system $\xi$ has to start with \IP{e_0}{d_0},\IP{e_1}{d_1}.

The procedure of elimination of initial input pairs has the following
three outcomes. Suppose that the first $k$ time frames of $\xi$ have
collapsed to trace \tr{e}{d}{0}{k-1}. Let \pnt{e_k} denote the state
to which $\xi$ transitions under input \IP{e_{k-1}}{d_{k-1}}. The
first outcome is as follows. Suppose that when eliminating an initial
input pair \IP{e_k}{x} one of the range reduction functions excludes a
bad state \pnt{e_{m+1}}. Then one can build a trace \tr{e}{d}{k}{m}
leading to \pnt{e_{m+1}}. This trace can be extended to trace
\tr{e}{s}{0}{m} that is a counterexample of the initial system $\xi$.

The second outcome is that \pnt{e_k} repeats a state \pnt{e_i}, $i
\leq k$.  This means that the procedure of excluding initial input
pairs above will be reproducing the same states between \pnt{e_i} and
\pnt{e_k}.  So no counterexample of length at most $n$ breaking
property $P$ exists and hence $P$ holds for $n$ transitions.

The third outcome is that the first $n$ time frames are collapsed to a
trace \tr{e}{d}{0}{n-1} where all states \pnt{e_i} are good and different
from each other. This means that property $P$ holds for $n$
transitions.

\section{Isolated And Public Traces}
\label{sec:isol_publ_traces}
In this section, we classify the traces excluded by a clause $C$ into
two sets: isolated traces and public traces.  The importance of such
classification is as follows. First, as we show in
Section~\ref{sec:bug_hunting}, one can use CRR to efficiently find
isolated counterexamples. Second, as we prove in
Proposition~\ref{prop:p_equiv_clause} below, if $C$ does not exclude
an isolated counterexample of length at most $n$ disproving property
$P$, then $C$ is a $P^n$-equivalent clause.
%
%
\begin{definition}
\label{def:isol_trace}
Let $\xi$ be a state transition system.  Let $C(S,X)$ be a clause such
that \Impl{\overline{C}}{I}. Let $E$ denote an initialized trace
\tr{s}{x}{0}{m} that is excluded by $C$.  We will call $E$
\tb{isolated with respect to clause} \pnt{C} if no state \pnt{s_i},$i
> 0$ of $E$ can be reached by a trace of length $i$ allowed by
$C$. Otherwise, $E$ is said to be \tb{public with respect to clause}
\pnt{C}.
\end{definition}

%
%
\begin{proposition}
\label{prop:isol_traces}
Let $C(S,X)$ be a clause such that \Impl{\overline{C}}{I}. Let
$H_1,\dots,H_m$ be range reduction formulas computed with respect to
clause $C$. Let $E$ denote an initialized trace \tr{s}{x}{0}{m} such
that
\begin{itemize}
\item \IP{s_0}{x_0} falsifies $C$ i.e. $E$ is excluded by $C$
\item \IP{s_i}{x_i} falsifies $H_i$, $i=1,\dots,m$.
\end{itemize} 
 Then $E$ is isolated with respect to $C$.
\end{proposition}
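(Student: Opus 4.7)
The plan is to prove Proposition~\ref{prop:isol_traces} by a direct unpacking of Definitions~\ref{def:rr_formulas} and~\ref{def:isol_trace}. The hypothesis already records, for each~$i$, exactly the semantic content that the range reduction formula is meant to capture; no new machinery is needed, just a translation into the language of isolated traces.

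Concretely, I would first read the condition ``\IP{s_i}{x_i} falsifies $H_i$'' as $H_i(\pnt{s_i})=0$, since by Definition~\ref{def:rr_formulas} the formula $H_i$ depends on state variables only. Then, for a fixed $i\in\{1,\dots,m\}$, I would observe that the prefix $\IP{s_0}{x_0},\dots,\IP{s_{i-1}}{x_{i-1}}$ of $E$ is itself a length-$i$ trace that reaches $\pnt{s_i}$: it is initialized because $E$ is, and it is excluded by $C$ because its first input pair \IP{s_0}{x_0} already falsifies $C$. This witnesses that $\pnt{s_i}$ is reachable in $i$ transitions, so the first clause of the ``iff'' in Definition~\ref{def:rr_formulas} is satisfied. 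Combined with $H_i(\pnt{s_i})=0$, the second clause must fire as well, giving: every length-$i$ trace reaching $\pnt{s_i}$ is excluded by~$C$. Quantifying over $i\in\{1,\dots,m\}$ and invoking Definition~\ref{def:isol_trace} then finishes the proof, since $E$ would be public w.r.t.\ $C$ precisely if some state $\pnt{s_i}$ with $i>0$ were reachable by a length-$i$ trace allowed by~$C$, which we have just ruled out.

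There is no real obstacle here; the only risks are notational. One must avoid reading ``\IP{s_i}{x_i} falsifies $H_i$'' as requiring $H_i$ to depend on~$X$, and one must carefully distinguish the two notions of exclusion in play: a trace being ``excluded by $C$'' refers to its \emph{initial} input pair falsifying~$C$, while ``$\pnt{s_i}$ is excluded by $H_i$'' refers to the state reached at time frame~$i$. Once these are cleanly identified, the argument is a one-line consequence of the definitions, and in particular does not depend on whether the $H_i$ are obtained by a noise-free or a noisy PQE solver, since Definition~\ref{def:rr_formulas} characterises $H_i$ semantically by an iff.
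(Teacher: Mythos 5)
Your proposal is correct and is essentially the paper's argument: both reduce to the observation that $H_i(\pnt{s_i})=0$ together with the ``iff'' in Definition~\ref{def:rr_formulas} forces every length-$i$ trace reaching $\pnt{s_i}$ to be excluded by $C$, hence no allowed trace reaches $\pnt{s_i}$. The paper merely packages the contrapositive of this step as Lemma~\ref{lemma:trace} and argues by contradiction, while you apply the definition directly; the extra remark about the prefix of $E$ witnessing reachability is harmless but not needed, since $H_i(\pnt{s_i})=0$ already yields both conjuncts of the definition.
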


%
%
\begin{definition}
Let $\xi$ be a system with property $P$.  Let $C(S,X)$ be a clause
such that \Impl{\overline{C}}{I}. Let $E$ denote a a counterexample
\tr{s}{x}{0}{m}. We will say that $E$ is a \tb{counterexample isolated
  (or public) with respect to clause} \pnt{C} if trace $E$ is isolated
(respectively public) with respect to $C$.
\end{definition}
%
%
\begin{proposition}
\label{prop:p_equiv_clause}
Let $\xi$ be a system with property $P$. Let $C(S,X)$ be a clause such
that \Impl{\overline{C}}{I}. Assume that $C$ does not exclude any
counterexample of length at most $n$ isolated with respect to $C$.
Then $C$ is a $P^n$-equivalent clause.
\end{proposition}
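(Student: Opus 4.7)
The plan is to unfold the definition of $P^n$-equivalence and prove the non-trivial direction by a splicing argument. Recall that $C$ is $P^n$-equivalent precisely when a counterexample of length at most $n$ exists in $\xi$ if and only if one allowed by $C$ exists. One direction is immediate: any counterexample allowed by $C$ is a counterexample of $\xi$. So the work is to show that if some counterexample of length $\le n$ exists in $\xi$, then one of length $\le n$ allowed by $C$ also exists.

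Fix an arbitrary counterexample $E=\tr{s}{x}{0}{m}$ of length $m\le n$, with the bad state \pnt{s_{m+1}} reached by $E$. There are two cases. If the initial input pair \IP{s_0}{x_0} satisfies $C$, then $E$ itself is allowed, and we are done. Otherwise, $E$ is excluded by $C$. By hypothesis $C$ excludes no isolated counterexample of length at most $n$, so $E$ must be public with respect to $C$: by Definition~\ref{def:isol_trace}, there exists an index $i>0$ and an initialized trace $E'=\tr{s'}{x'}{0}{i-1}$ allowed by $C$ whose reached state equals \pnt{s_i}.

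Now I would splice: form the sequence $E'' = \IP{s'_0}{x'_0},\dots,\IP{s'_{i-1}}{x'_{i-1}},\IP{s_i}{x_i},\dots,\IP{s_m}{x_m}$. By construction the transition relation connects \pnt{s'_{i-1}} through $\pnt{x'_{i-1}}$ to \pnt{s_i}, so $E''$ is a valid initialized trace of length $m\le n$ reaching the bad state \pnt{s_{m+1}}, and its initial input pair \IP{s'_0}{x'_0} satisfies $C$ because $E'$ is allowed. The remaining question is whether every state of $E''$ other than the last reached one is good. The tail states \pnt{s_i},\dots,\pnt{s_m} are good because they come from the counterexample $E$. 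The prefix states \pnt{s'_0},\dots,\pnt{s'_{i-1}} need not be good a priori, and this is the main obstacle.

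To handle that, I would argue by taking the least index. If every state in the prefix is good, then $E''$ is itself a counterexample allowed by $C$ of length $m\le n$, which finishes the proof. Otherwise, let $j$ be the smallest index with \pnt{s'_j} bad; since \Impl{I}{P} and \pnt{s'_0} is an initial state, we have $j\ge 1$, so the truncated prefix $\tr{s'}{x'}{0}{j-1}$ is an initialized trace of length $j\le i\le m\le n$ whose states are all good and whose reached state \pnt{s'_j} is bad. This truncated trace inherits its initial input pair \IP{s'_0}{x'_0} from $E'$, so it is allowed by $C$, and it is the desired allowed counterexample. In all cases an allowed counterexample of length at most $n$ exists, establishing $P^n$-equivalence of $C$.
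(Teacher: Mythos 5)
Your proof is correct and follows essentially the same route as the paper's: an excluded counterexample of length $m\le n$ is, by hypothesis, not isolated, so publicness yields an allowed initialized prefix reaching some \pnt{s_i}, which you splice onto the tail of $E$ to produce an allowed counterexample (the paper phrases this as a contradiction, you argue directly, but that is cosmetic). The one substantive difference is your least-index truncation handling the possibility that the allowed prefix itself passes through a bad state: the paper's proof declares the spliced trace $E''$ a counterexample solely because the state reached by its last input pair is bad, silently ignoring the definition's requirement that every state of a counterexample be good, whereas you observe that if some prefix state \pnt{s'_j} is bad then (using the standing assumption \Impl{I}{P} to get $j\ge 1$) the truncated prefix is itself an allowed counterexample of length at most $n$. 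So your extra case analysis is not redundancy but a genuine patch of a small gap in the paper's argument; your version of the splice (replacing the prefix strictly before \pnt{s_i} and keeping \IP{s_i}{x_i} from $E$) is also the correct reading, since the paper's notation as written would require the transition out of \pnt{s'_k} under the unrelated input \pnt{x'_k} to land on \pnt{s_{k+1}}.
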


\section{Bug Hunting By CRR}
\label{sec:bug_hunting}
In this section, we describe how bug hunting is done by CRR.  In
Subsection~\ref{subsec:bh_noise_free}, we discuss the construction of
range reduction formulas by a noise-free PQE solver.  We show that
such a PQE-solver can \ti{prove the existence} of a bug without
generation of an explicit counterexample. This is done by just showing
that excluding initial input pairs by a clause $C$ leads to excluding
a trace of length $k$ leading to a bad state.  In
Subsection~\ref{subsec:bh_noisy}, we describe building range reduction
formulas by a noisy PQE-solver. We show that in this case, one has to
build a counterexample \ti{explicitly}.

%
%
\subsection{Bug hunting with a noise-free PQE solver}
\label{subsec:bh_noise_free}

%
%
\begin{proposition}
\label{prop:noise_free_pqe}
Let $\xi$ be a state transition system with property $P$. Let $C(S,X)$
be a non-empty clause such that \Impl{\overline{C}}{I}.  Let $H_0$
denote formula equal to $C$. Let formulas $H_1,\dots,H_n$ be obtained
recursively as follows.  Let $\Phi_0$ denote formula equal to $I$.
Let $\Phi_i,~~0 < i \leq n$ denote formula $I \wedge H_0 \wedge T_0
\wedge \dots \wedge H_{i-1} \wedge T_{i-1}$. Here $T_j =
T(S_j,X_j,S_{j+1})$ where $S_j$ and $X_j$ are state and input
variables of $j$-th time frame.  Formula $H_{i+1}$ is obtained by
taking $H_i$ out of the scope of quantifiers in formula \prob{W}{H_i
  \wedge T_i \wedge \Phi_i} where $W = S_0 \cup X_0 \cup \dots \cup
S_i \cup X_i$.  That is $H_{i+1} \wedge \prob{W}{T_i \wedge \Phi_i}
\equiv \prob{W}{H_i \wedge T_i \wedge \Phi_i}$.  Then formulas
$H_1,\dots,H_n$ are range reduction formulas.
\end{proposition}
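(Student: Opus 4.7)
The plan is to prove Proposition~\ref{prop:noise_free_pqe} by induction on $i$, with each inductive step reducing to an application of Proposition~\ref{prop:range_red} (the range reduction semantics of noise-free PQE solutions). Throughout, I assume the PQE solver producing each $H_{i+1}$ is noise-free; otherwise the PQE equation alone does not pin down $H_{i+1}(\pnt{s})$ for $\pnt{s}$ unreachable in $i{+}1$ transitions.

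For the base case (computing $H_1$ with $\Phi_0 = I$), the defining PQE equation specializes to $H_1 \wedge \prob{W}{T_0 \wedge I} \equiv \prob{W}{C \wedge T_0 \wedge I}$ where $W = S_0 \cup X_0$. I would view $T_0$ as the Tseitin CNF of a single-time-frame combinational circuit $M_1$ with inputs $(S_0,X_0)$ and output $S_1$, absorbing the conjunct $I(S_0)$ as a restriction on its inputs. Proposition~\ref{prop:range_red} applied with input clause $C$ then yields that the noise-free $H_1(\pnt{s_1}) = 0$ iff $\pnt{s_1}$ lies in the range of $M_1$ restricted to initialized inputs---that is, $\pnt{s_1}$ is reachable in one transition---and every initialized length-$1$ trace reaching $\pnt{s_1}$ falsifies $C$. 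This is exactly Definition~\ref{def:rr_formulas} applied to $H_1$.

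For the inductive step, assume $H_1,\dots,H_i$ are range reduction formulas. The key observation is that, under the inductive hypothesis, the intermediate conjuncts $H_j(S_j)$ for $1 \le j \le i-1$ in $\Phi_i$ and the $H_i(S_i)$ on the right-hand side of the PQE equation become semantically redundant. Indeed, any assignment to $W$ satisfying $I(S_0) \wedge C(S_0,X_0) \wedge T_0 \wedge \dots \wedge T_{i-1}$ exhibits each intermediate $\pnt{s_j}$ as reachable in $j$ transitions by a $C$-allowed trace, whence $H_j(\pnt{s_j}) = 1$ by the inductive hypothesis (and the same argument applies to $\pnt{s_i}$). Consequently, both $\prob{W}{T_i \wedge \Phi_i}$ and $\prob{W}{H_i \wedge T_i \wedge \Phi_i}$ evaluate to $1$ on exactly those $\pnt{s_{i+1}}$ reachable by some initialized $C$-allowed trace of length $i{+}1$. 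Viewing the $(i{+}1)$-fold composition of $N$ as a combinational circuit $M_{i+1}$ with primary inputs $(S_0,X_0,\dots,X_i)$ and primary output $S_{i+1}$, absorbing $I$ as an input restriction, and applying Proposition~\ref{prop:range_red} with input clause $C$, we conclude that the noise-free $H_{i+1}(\pnt{s}) = 0$ iff $\pnt{s}$ is reachable in $i{+}1$ transitions and every initialized length-$(i{+}1)$ trace reaching $\pnt{s}$ is $C$-excluded---the required Definition~\ref{def:rr_formulas} semantics.

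The main obstacle will be the rigorous application of Proposition~\ref{prop:range_red} in the presence of the initial-state constraint $I(S_0)$ (which is not a single input clause) and the intermediate conjuncts $H_j(S_j)$ in $\Phi_i$ (which are not input restrictions on $M_{i+1}$). I would handle $I$ by augmenting $M_{i+1}$ with a front-end filter that invalidates non-initial $S_0$ assignments, so that $I$ effectively becomes an input constraint, and observe that the intermediate $H_j$ conjuncts add no further restriction thanks to the inductive hypothesis. A secondary technical subtlety is that the PQE equation alone leaves $H_{i+1}$ underdetermined on states unreachable in $i{+}1$ transitions; noise-freeness of the solver is precisely what forces $H_{i+1} = 1$ on such states, in agreement with Definition~\ref{def:rr_formulas}.
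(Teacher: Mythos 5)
Your base case is fine, and your observation that the conjuncts $H_j(S_j)$, $1\le j\le i$, are automatically satisfied along any $C$-allowed initialized trace is exactly the fact the paper's own argument relies on. The gap is in your inductive step, where you invoke Proposition~\ref{prop:range_red} ``with input clause $C$'' for the composed circuit $M_{i+1}$. The PQE instance that actually defines $H_{i+1}$ is $H_{i+1}\wedge\prob{W}{T_i\wedge\Phi_i}\equiv\prob{W}{H_i\wedge T_i\wedge\Phi_i}$: here $C$ appears as $H_0$ inside $\Phi_i$ on \emph{both} sides of the equivalence, and the formula being taken out is $H_i(S_i)$, a constraint on an internal cut of $M_{i+1}$, not on its primary inputs. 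So the hypotheses of Proposition~\ref{prop:range_red} are simply not met. Worse, your own redundancy observation shows that, under the inductive hypothesis, $\prob{W}{T_i\wedge\Phi_i}$ and $\prob{W}{H_i\wedge T_i\wedge\Phi_i}$ coincide (both characterize the set $R$ of states reachable by $C$-allowed initialized traces of length $i+1$); the defining equivalence then degenerates to $H_{i+1}\wedge R\equiv R$ and cannot force $H_{i+1}(\pnt{s})=0$ on any state. Moreover, a clause over $S_{i+1}$ that is falsified only outside $R$ is implied by $T_i\wedge\Phi_i$ and is therefore noise, so a rigorous reading of your own premises yields the trivial noise-free solution $H_{i+1}\equiv 1$ rather than the range reduction formula. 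You reach the stated conclusion only by silently replacing the recursive instance with the ``clean'' instance $H_{i+1}\wedge\prob{W}{I\wedge T_0\wedge\dots\wedge T_i}\equiv\prob{W}{C\wedge I\wedge T_0\wedge\dots\wedge T_i}$, which is a different PQE problem; justifying that replacement (i.e., showing the two instances have the same noise-free solutions) is precisely the nontrivial content that is missing.

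The paper's proof takes a different and more modest route that avoids this trap: it argues by induction, directly from the defining equivalence, that $H_{i+1}$ cannot exclude a \emph{wrong} state. Noise-freeness rules out excluding states unreachable in $i+1$ transitions (its case A), and the fact that an allowed trace reaching $\pnt{s}$ satisfies $H_i\wedge T_i\wedge\Phi_i$ (by the inductive hypothesis) rules out excluding states reachable by allowed traces (its case B). Note that the paper never claims the equivalence forces $H_{i+1}$ to be $0$ anywhere --- it only establishes the direction ``$H_{i+1}(\pnt{s})=0$ implies $\pnt{s}$ is reachable only by excluded traces'' --- whereas your reduction purports to establish the full ``iff.'' To repair your argument you would either have to prove the instance-replacement step above or fall back on the paper's case analysis.
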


The fact that range reduction formula $H_i$ excludes only reachable
states guarantees that if \Nmpl{\overline{H}_i}{P} then a
counterexample exists.  As we show below this is not true when a noisy
PQE solver is used.

%
%
\subsection{Bug hunting with a noisy PQE solver}

\label{subsec:bh_noisy}
%
%
\begin{proposition}
\label{prop:noisy_pqe}
Let $H^*_i, i=0,\dots,n$ be formulas obtained as described in
Proposition~\ref{prop:noise_free_pqe} with only one exception. A \ti{noisy}
PQE-solver is used to obtain $H^*_{i+1}$ by taking $H^*_i$ out of the
scope of quantifiers in \prob{W}{H^*_i \wedge T_i \wedge \Phi^*_i}.
Here $\Phi^*_0 = I$, $H^*_0=C$ and $\Phi^*_i = I \wedge H^*_0 \wedge
T_0 \wedge \dots \wedge H^*_{i-1} \wedge T_{i-1}$ for $i < 0 \leq n$.
Then \Impl{H^*_i}{H_i} holds where $H_i,i=1,\dots,n$ are range
reduction formulas.
\end{proposition}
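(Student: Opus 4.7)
The plan is to prove $H^*_i \to H_i$ by induction on $i$, for $i = 1, \dots, n$.

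For the base case $i = 1$, note that $\Phi^*_0 = \Phi_0 = I$ and $H^*_0 = H_0 = C$, so both $H_1$ and $H^*_1$ are solutions of the same PQE problem $\exists W[C \wedge T_0 \wedge I]$ with $W = S_0 \cup X_0$. The general noisy-vs-noise-free inclusion --- the abstract form of Proposition~\ref{prop:range_red}(2) --- yields $H^*_1 \to H_1$ directly.

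For the inductive step, assume $H^*_j \to H_j$ for every $j \leq i$. Then $\Phi^*_i \to \Phi_i$ conjunct by conjunct: the $I$ and $T_j$ pieces agree, and each $H^*_j$ piece implies the corresponding $H_j$ piece by the induction hypothesis. Combined with $H^*_i \to H_i$, this gives $\exists W[H^*_i \wedge T_i \wedge \Phi^*_i] \to \exists W[H_i \wedge T_i \wedge \Phi_i]$. Now suppose $H_{i+1}(\pnt{s}) = 0$; I want to show $H^*_{i+1}(\pnt{s}) = 0$. The PQE equivalence defining $H_{i+1}$ (Proposition~\ref{prop:noise_free_pqe}) gives $\exists W[H_i \wedge T_i \wedge \Phi_i](\pnt{s}) = 0$, so by the implication just derived, $\exists W[H^*_i \wedge T_i \wedge \Phi^*_i](\pnt{s}) = 0$. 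Plugging into the PQE equivalence for $H^*_{i+1}$ yields $H^*_{i+1}(\pnt{s}) \wedge \exists W[T_i \wedge \Phi^*_i](\pnt{s}) = 0$, hence either $H^*_{i+1}(\pnt{s}) = 0$ (and we are done) or $\exists W[T_i \wedge \Phi^*_i](\pnt{s}) = 0$.

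The hard part is this second sub-case, where the PQE equivalence is vacuous and does not pin down $H^*_{i+1}(\pnt{s})$. I would dispatch it by appealing to the standing assumption that a noisy PQE-solver only emits clauses that are logical consequences of its input formula (as for the resolution-based solver of~\cite{tech_rep_pqe}) together with the semantic meaning of $H_{i+1}$ from Definition~\ref{def:rr_formulas}: whenever $H_{i+1}(\pnt{s}) = 0$, the state $\pnt{s}$ is either unreachable in $i{+}1$ transitions or reachable only by $C$-excluded traces, and in either situation the clauses produced by the noisy solver --- which inherit the extra zeros of the $H^*_j$'s carried in by the induction hypothesis --- can be shown to falsify $\pnt{s}$. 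Making this bookkeeping rigorous, tracking how the extra clauses added at earlier stages of the recursion propagate to force the right zeros at stage $i{+}1$, is the most delicate step of the argument.
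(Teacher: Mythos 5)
Your derivation is sound up to the dichotomy: from $H_{i+1}(\pnt{s})=0$ you correctly conclude that either $H^*_{i+1}(\pnt{s})=0$ or \prob{W}{T_i \wedge \Phi^*_i} is false at \pnt{s}. But the second horn is the entire content of the proposition, and what you offer there is not an argument. At such an \pnt{s} the defining equivalence for $H^*_{i+1}$ reads $H^*_{i+1}(\pnt{s}) \wedge 0 \equiv 0$ and constrains nothing; and the remedy you invoke --- that a noisy solver emits only logical consequences of its input --- pushes in the wrong direction: a solver restricted to emitting consequences is perfectly free to emit \emph{no} clause falsified by \pnt{s} over the region where \prob{W}{T_i \wedge \Phi^*_i} is identically false, leaving $H^*_{i+1}(\pnt{s})=1$ and breaking \Impl{H^*_{i+1}}{H_{i+1}}. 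So ``the clauses can be shown to falsify \pnt{s}'' is an assertion, not a proof, and it does not follow from the hypotheses you cite.

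The ingredient you never use is the one the paper's proof is built on: the \emph{noise-freeness} of the exact solutions. The paper interposes an auxiliary formula $Q_{i+1}$, a noise-free result of taking $H^*_i$ (not $H_i$) out of the scope of quantifiers, and factors the claim as \Impl{H^*_{i+1}}{Q_{i+1}} (Proposition~\ref{prop:range_red}, part~2: noisy implies noise-free for the \emph{same} PQE instance) composed with \Impl{Q_{i+1}}{H_{i+1}} (Lemma~\ref{lemma:impl}). The proof of Lemma~\ref{lemma:impl} is precisely where your hard sub-case is dispatched: because the exact solution is noise-free, its being $0$ at \pnt{s} forces the quantified prefix to be \emph{satisfiable} at \pnt{s} --- i.e.\ \pnt{s} lies in the range of the corresponding circuit --- which is exactly the fact your pointwise computation lacks. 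To salvage your direct route you would need (a) to invoke noise-freeness of $H_{i+1}$ to obtain that \prob{W}{T_i\wedge\Phi_i} holds at \pnt{s}, and (b) to bridge the remaining discrepancy between $\Phi_i$ and $\Phi^*_i$, which is the same delicate point the paper resolves by introducing $Q_{i+1}$. As written, the proposal leaves the essential step open.
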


Proposition~\ref{prop:noisy_pqe} suggests that a noisy PQE-solver, in
general, builds a formula $H^*_i$ that over-approximates the set of
states for which a correct range reduction formula $H_i$ evaluates to
0.  For that reason we will refer to $H^*_i$ as an \tb{approximate
  range reduction formula}. Since $H^*_i$ is not logically equivalent
to $H_i$, the former can exclude states that are not reachable by
$\xi$.  So if \Nmpl{\overline{H^*_i}}{P} for some state \pnt{s_i}, one
needs to check if \pnt{s_i} is reachable from an initial state. This
can be done as follows. First, a state \pnt{s_{i-1}} from which there
is a transition to \pnt{s_i} is searched for. If such a state exists,
then a state \pnt{s_{i-2}} from which there is a transition to
\pnt{s_{i-1}} is searched for and so on. This process results either
in finding a counterexample reaching state \pnt{s_i} or deriving a
clause falsified by \pnt{s_i}. The latter means that \pnt{s_i} is
unreachable.
\subsection{Building range reduction formulas incrementally}
\label{subsec:rr_form_increm}
In the previous subsection, we showed that a range reduction formula
$H_{i+1}$ can be obtained by taking $H_i$ out of the scope of
quantifiers in \prob{W}{H_i \wedge T_i \wedge \Phi_i}.  Note that
formula $\Phi_i$ contains $i-1$ copies of the transition relation and
so gets very large as $i$ grows. Fortunately, in general, one only
needs a small set of time frames preceding the time frame $i$ to
derive the clauses of $H_{i+1}$. This makes derivation of $H_{i+1}$
\ti{local}.

The reason for derivation of $H_{i+1}$ to be local is as follows.
Solving the PQE problem comes down to generating a set of clauses
depending on free variables of \prob{W}{H_i \wedge T \wedge \Phi_i}
that makes the clauses of $H_i$ redundant in \prob{W}{H_i \wedge T
  \wedge \Phi_i}. In~\cite{tech_rep_pqe}, we introduced a PQE-solver
called \PQE that implements this strategy.  To solve the PQE problem,
\PQE maintains a set of clauses to be Proved Redundant. We will refer
to a clause of this set as a PR-clause. Originally, the set of
PR-clauses consists of the clauses of $H_i$. A resolvent clause $C$
that is a descendant of $H_i$ also becomes a PR clause and so needs to
be proved redundant. The only exception is the case when $C$ depends
only of free variables of \prob{W}{H_i \wedge T \wedge \Phi_i} i.e. on
variables of $S_{i+1}$. Then $C$ is just added to $H_{i+1}$.

\PQE uses branching to first prove redundancy of PR-clauses in
subspaces. Then it merges the results of different
branches. Importantly, \PQE backtracks as soon as all PR-clauses are
proved redundant in the current subspace. This means that \PQE needs
clauses of $\Phi_i$ corresponding to $j$-th time frame where $j < i$
only if there is a PR-clause that contains a variable of $j$-th time
frame.  \PQE produces a new PR-clause $C''$ obtained from another
PR-clause \ti{only} if it cannot prove redundancy of $C'$ in the
current subspace.  Generation of $C''$ can be avoided if the current
formula has a non-PR clause that subsumes $C'$ in the current
subspace. For instance, one could prevent the appearance of clause
$C''$ containing a variable of $j$-th time frame by exploiting non-PR
clauses that depend on variables of $S_{j+1}$ and implied by formula
$\Phi_{j+2}$.  Such clauses could have been derived when building
range reduction formula $H_{j+2}$ by taking $H_{j+1}$ out of the scope
of quantifiers.

So, making computation of $H_{i+1}$ local comes down to preventing the
appearance of PR-clauses containing variables of time frames that are
far away from the $i$-th time frame. This is achieved by re-using
clauses derived when building range reduction formulas $H_j$, $j \leq
i$.

\section{Identification Of $P$-Equivalent Clauses}
\label{sec:p_equiv_clauses}
In this section, we describe two cases where one can prove that a
clause $C$ is $P^n$-equivalent. We assume here that computation of
range reduction formulas is performed by a noisy
PQE-solver. Proposition~\ref{prop:trivial_case} describes the case
where one needs to compute all formulas $H^*_i$, $i=1,\dots,n$. The
case where this is not necessary is addressed by
Proposition~\ref{prop:empty_rr_form}.
%
%
\begin{proposition}
\label{prop:trivial_case}
Let $\xi$ be a system with property $P$. Let $C(S,X)$ be a clause such
that \Impl{\overline{C}}{I}. Let $H^*_1,\dots,H^*_n$ be approximate
range reduction formulas computed with respect to clause $C$ by a
noisy PQE solver.  Suppose that no formula $H^*_i$,~$i=1,\dots,n$
excludes a reachable bad state \pnt{s}. Then clause $C$ is
$P^n$-equivalent.
\end{proposition}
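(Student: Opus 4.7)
The plan is to reason by contradiction, using Proposition~\ref{prop:p_equiv_clause} to produce a witness and Proposition~\ref{prop:noisy_pqe} to transfer information from the exact range reduction formulas $H_i$ to the approximate ones $H^*_i$. Assume $C$ is not $P^n$-equivalent. By the contrapositive of Proposition~\ref{prop:p_equiv_clause}, some counterexample $E = \tr{s}{x}{0}{k}$ of length $k+1 \le n$ is excluded by $C$ and is isolated with respect to $C$. Let $\pnt{s_{k+1}}$ be the bad state reached by $E$. The goal is to show that $H^*_{k+1}$ excludes $\pnt{s_{k+1}}$, which will contradict the hypothesis of the proposition.

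The main obstacle is proving that the exact formula satisfies $H_{k+1}(\pnt{s_{k+1}}) = 0$. Since $E$ witnesses the reachability of $\pnt{s_{k+1}}$ in $k+1$ transitions, Definition~\ref{def:rr_formulas} reduces the task to showing that every trace of length $k+1$ reaching $\pnt{s_{k+1}}$ is excluded by $C$. This step is delicate because the definition of an isolated trace only constrains the internal states $\pnt{s_1},\dots,\pnt{s_k}$ of $E$, not the reached state $\pnt{s_{k+1}}$. I would handle it by a nested contradiction: suppose some allowed trace $F = (\pnt{s'_0},\pnt{x'_0}),\dots,(\pnt{s'_k},\pnt{x'_k})$ of length $k+1$ reaches $\pnt{s_{k+1}}$. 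Since \Impl{I}{P} forces $\pnt{s'_0}$ to be good while $\pnt{s_{k+1}}$ is bad, there is a smallest index $j^* \ge 1$ such that $\pnt{s'_{j^*}}$ is bad. The prefix of $F$ of length $j^* \le k+1 \le n$ is then a counterexample, and its allowance by $C$ depends only on $(\pnt{s'_0},\pnt{x'_0})$, so it is still allowed by $C$. But the failure of $P^n$-equivalence forces $C$ to exclude every counterexample of length at most $n$, a contradiction.

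With $H_{k+1}(\pnt{s_{k+1}}) = 0$ in hand, Proposition~\ref{prop:noisy_pqe} gives \Impl{H^*_{k+1}}{H_{k+1}}, whose contrapositive at $\pnt{s_{k+1}}$ yields $H^*_{k+1}(\pnt{s_{k+1}}) = 0$. Because $\pnt{s_{k+1}}$ is a bad state reachable via $E$ and $k+1 \le n$, the formula $H^*_{k+1}$ excludes a reachable bad state, contradicting the hypothesis of the proposition. Hence $C$ must be $P^n$-equivalent.
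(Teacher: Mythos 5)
Your proof is correct and rests on the same two pillars as the paper's --- Proposition~\ref{prop:noisy_pqe} to obtain \Impl{H^*_i}{H_i}, and Proposition~\ref{prop:p_equiv_clause} to connect exclusion of counterexamples to $P^n$-equivalence --- but you run the argument in the contrapositive direction and, more importantly, you supply a step the paper leaves implicit. The paper argues forward: since no $H^*_i$ excludes a reachable bad state, no $H_i$ excludes a bad state, and ``this means that clause $C$ does not exclude an isolated counterexample of length at most $n$''; that last inference silently requires showing that an excluded isolated counterexample forces some exact formula $H_i$ to exclude the bad state it \emph{reaches}, which is precisely the point you flag as delicate (Definition~\ref{def:isol_trace} constrains only the internal states \pnt{s_1},\dots,\pnt{s_k}, not the reached state \pnt{s_{k+1}}). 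Your prefix argument --- any allowed trace of length $k+1$ reaching \pnt{s_{k+1}} would, truncated at its first bad state, yield an allowed counterexample of length at most $n$, contradicting the assumed failure of $P^n$-equivalence --- closes that gap cleanly, using the standing assumption \Impl{I}{P} to guarantee the first state is good. A side observation: once you have that argument, the detour through isolated traces becomes unnecessary, since the failure of $P^n$-equivalence already gives you both an excluded counterexample and the fact that every counterexample of length at most $n$ is excluded, which is all your nested contradiction uses. So your proof is a sound and somewhat more self-contained rendering of the paper's argument.
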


%
%
\begin{proposition}
\label{prop:empty_rr_form}
Let $\xi$ be a system with property $P$. Let $C(S,X)$ be a clause such
that \Impl{\overline{C}}{I}. Let $H^*_1,\dots,H^*_i$ be approximate
range reduction formulas computed with respect to clause $C$ by a
noisy PQE solver. Suppose that every bad state excluded by $H^*_j$, $1
\leq j < i$ is unreachable. Suppose that every state (bad or good)
excluded by $H^*_i$ is unreachable. Then clause $C$ is
$P^n$-equivalent for any $n > 0$.
\end{proposition}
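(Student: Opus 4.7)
The plan is to prove $P^n$-equivalence for arbitrary $n > 0$ directly: I will show that whenever a counterexample of length at most $n$ exists, an allowed counterexample of length at most $n$ exists as well, so constraining $\xi$ by $C$ preserves the property question. The bridge from the noisy $H^*_j$ of the hypotheses to the noise-free range reduction formulas $H_j$ is Proposition~\ref{prop:noisy_pqe}, which gives $H^*_j \rightarrow H_j$; combined with Definition~\ref{def:rr_formulas} (which says $H_j$ excludes only reachable states), this lets me upgrade hypotheses (a) and (b) into strong statements about the noise-free $H_j$.

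First I would establish that $H_i \equiv 1$. Any state excluded by $H_i$ is also excluded by $H^*_i$ by Proposition~\ref{prop:noisy_pqe}, hence unreachable by the second hypothesis; but by Definition~\ref{def:rr_formulas} $H_i$ excludes only reachable states, so $H_i$ excludes nothing. I would then prove by induction on $j \geq i$ that $H_j \equiv 1$. In the inductive step, any state $\pnt{s}$ reachable in $j+1$ transitions is reached by some length-$(j+1)$ trace whose time-$j$ state, by the inductive hypothesis $H_j \equiv 1$, is reachable by an allowed length-$j$ trace $F$; extending $F$ by one more transition to $\pnt{s}$ yields an allowed length-$(j+1)$ trace reaching $\pnt{s}$, since $C$ constrains only the initial time frame and so allowedness is preserved when extending at the end. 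Analogously, using the first hypothesis, for every $j$ with $1 \leq j < i$, the formula $H_j$ excludes no bad state.

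Given any counterexample $E = \tr{s}{x}{0}{m}$ of length $m+1 \leq n$ reaching bad $\pnt{s_{m+1}}$, the appropriate case above (either $m+1 < i$ or $m+1 \geq i$) yields $H_{m+1}(\pnt{s_{m+1}}) = 1$. Since $\pnt{s_{m+1}}$ is reachable, Definition~\ref{def:rr_formulas} supplies an allowed length-$(m+1)$ trace $E'$ reaching $\pnt{s_{m+1}}$. If every intermediate state of $E'$ is good, $E'$ itself is an allowed counterexample; otherwise the prefix of $E'$ ending at the first bad intermediate state is an allowed counterexample of length at most $m+1 \leq n$. The first bad state cannot be initial, since the standing assumption $I \rightarrow P$ forces all initial states to be good, so the prefix has positive length. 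Either way an allowed counterexample of length at most $n$ exists, so $C$ is $P^n$-equivalent.

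The main obstacle is the induction propagating $H_j \equiv 1$ to $H_{j+1} \equiv 1$; its correctness hinges on the locality of $C$ (it constrains only the initial input pair), which must be invoked explicitly to justify that extending an allowed trace at the end preserves allowedness. The case split on $m+1$ versus $i$ and the prefix-truncation step (which uses $I \rightarrow P$ to avoid dropping to length zero) are less substantial but warrant care.
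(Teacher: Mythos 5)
Your proof is correct, and its skeleton matches the paper's: both use Proposition~\ref{prop:noisy_pqe} to pass from the noisy $H^*_j$ to the noise-free $H_j$, both conclude that $H_j$ excludes no bad state for $j<i$ and that $H_i\equiv 1$, and both then propagate triviality to every later time frame before concluding $P^n$-equivalence. You diverge in how two of these steps are discharged. For the propagation $H_j\equiv 1\Rightarrow H_{j+1}\equiv 1$, the paper argues syntactically through the PQE recurrence of Proposition~\ref{prop:noise_free_pqe} (taking a trivially true formula out of the scope of quantifiers yields a trivially true formula), whereas you argue semantically from Definition~\ref{def:rr_formulas}: every state reachable in $j+1$ transitions is reached by extending an allowed length-$j$ trace, and allowedness is preserved under extension because $C$ constrains only the initial time frame. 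Your version is more elementary and does not depend on the particular recursive construction of the $H_j$. For the final step, the paper simply asserts that ``no $H_j$ excludes a bad state'' yields $P^n$-equivalence, implicitly routing through Propositions~\ref{prop:isol_traces} and~\ref{prop:p_equiv_clause} as in the proof of Proposition~\ref{prop:trivial_case}; you instead construct an allowed counterexample directly from an excluded one, and your truncation of $E'$ at its first bad intermediate state (using \Impl{I}{P} to keep the prefix nonempty) handles a subtlety --- that the allowed trace reaching the bad state need not itself satisfy the ``all intermediate states good'' clause of the counterexample definition --- which the paper's own proof of Proposition~\ref{prop:p_equiv_clause} glosses over. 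The paper's route buys brevity by reusing earlier propositions; yours buys self-containment and a slightly tighter treatment of that edge case.
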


Proposition~\ref{prop:empty_rr_form} suggests that one can declare a
clause $C$ $P^n$-equivalent for an arbitrary $n$ if formula $H^*_i$
does not exclude any reachable states.  Let us consider the following
three cases. The first case is that $H^*_i$ is empty i.e. $H^*_i
\equiv 0$.  The second case, is that $H^*_i$ excludes only bad states
i.e. \Impl{\overline{H^*_i}}{\overline{P}}. The third case occurs when
\Nmpl{\overline{H^*_i}}{\overline{P}} i.e. when $H^*_i$ excludes good
states.

From the viewpoint of performance, it makes sense to check if every
state excluded by $H^*_i$ is unreachable only in the first and second
cases. In the first case, no state is excluded by $H^*_i$ and so no
extra work needs to be done to apply
Proposition~\ref{prop:empty_rr_form}. In the second case, one needs to
perform only one extra check to verify if
\Impl{\overline{H^*_i}}{\overline{P}} holds. Checking if a \ti{bad}
state excluded by $H^*_i$ is reachable has to be done anyway to
guarantee that no counterexample excluded by clause $C$ is overlooked.
On the other hand, if \Impl{\overline{H^*_i}}{\overline{P}} does not
hold i.e. if the third case above occurs, the amount of extra work one
has to do can be very high. This is because $H^*_i$ can exclude a
large number of unreachable \ti{good} states.

\section{Comparison Of Model Checking By CRR With Other Approaches}
\label{sec:comparison}
CRR is essentially a method of computing an \ti{under-approximation}
of the set of reachable states.  In this section, we compare the CRR
method with Bounded Model Checking (BMC) and with methods based on
computing an \ti{Over-approximation} of Reachable States. We will
refer to the latter as ORS-methods.  We will assume that the precise
computation of a set of reachable states is just a special case of its
over-approximation.

In Section~\ref{sec:example}, we already made some comparison of the
CRR method with other model checkers on a simple example. Here, we
continue this work in the general case. Since, in this paper, we
emphasize the great potential of using CRR in bug hunting we compare
the CRR method with BMC and ORS-methods in the context of generation
of counterexamples.  In Subsection~\ref{subsec:bh_ORS}, we compare the
CRR method and ORS-methods. In Subsection~\ref{subsec:bh_BMC} we
relate the bug hunting of the CRR method with that of BMC. For the
sake of simplicity, in this section, we assume that the CRR method
employs a noise-free PQE-solver.

%
%
\subsection{CRR and ORS-methods}
\label{subsec:bh_ORS}
The difference between the CRR method and ORS-methods as far as bug
hunting is concerned is that the ORS-methods look for a bug in a
breadth-first manner.  In particular, they try to find the shortest
possible counterexample. For example, IC3 first makes sure that a set
of invariants is met that guarantees that no counter-example of length
$n$ exists before it increments $n$ by 1.  The reason for such
strategy is that the number of states reachable in $n$ transitions
exponentially grows in $n$. This cripples the performance of model
checkers that compute the set of reachable sets precisely.  The best
model checkers like IC3 address this problem by over-approximating the
set of reachable states. However, finding bugs in a breadth-first
manner may render inefficient even successful ORS-methods like IC3.

In contrast, the CRR-method looks for a bug in a \ti{depth-first
  manner}.  Given a clause $C(S,X)$ and a number $n$ it computes a set
of range reduction formulas $H_1,\dots,$. This computation goes on
until a formula $H_i$, $i \leq n$ excludes a bad state or $C$ is
proved $P^n$-equivalent. A remarkable fact here is that the CRR method
can \ti{lock onto} a counterexample  (that is isolated with respect
$C$) by computing a drastic under-approximation of the set of
reachable states. By locking onto a counterexample $E=\tr{s}{x}{0}{m}$ we
mean generation of sets $D_i$, $i=0,\dots,m$ such that $D_0 \times
\dots \times D_m$ contains the tuple \TR{s}{0}{m}.

As we mentioned in Subsection~\ref{subsec:main_idea}, the set of
states excluded by formula $H_i$ can be represented as the difference
of sets $A_i$ and $B_i$. Here $A_i$ is the set of states reachable by
traces excluded by clause $C$ and $B_i$ is the subset of $A_i$
consisting of the states that are also reachable by traces allowed by
$C$.  Notice that to lock onto counterexample $E$, an ORS-method would
have to compute sets $A_1,\dots,A_m$ or their over-approximation.  The
CRR method locks onto $E$ by computing \ti{only sets} $A_1 \setminus
B_1, \dots, A_m \setminus B_m$ that can be drastically smaller.

%
%
\subsection{CRR and BMC}
\label{subsec:bh_BMC}
Similarly to ORS-methods, BMC searches for a counterexample in a
breadth-first manner. First BMC searches for a counterexample of
length 1.  If no such counterexample exists, BMC searches for a
counterexample of length 2 and so on. So the main difference of the
CRR method from BMC  is that the former searches for a counterexample in a
depth-first manner.

To check if a counterexample of length $i$ exists, BMC tests the
satisfiability of formula $G_i$ equal to $I \wedge T_0 \dots \wedge
T_{i-1} \wedge \overline{P}$.  The size of $G_i$ is linear in the
number of time frames.  So the reach of BMC is typically limited to
counterexamples of length 100-200. In theory, the CRR method has to
deal with formula $\Phi_i$ whose size is linear in $i$.  However, as
we conjectured in Subsection~\ref{subsec:rr_form_increm}, the
computation of range formula $H_{i+1}$ that involves formula $\Phi_i$
can be made local. In this case, only clauses of a small number of
time frames preceding time frame $i$ are employed.  So the CRR method
can potentially find very long counterexamples.

One more important advantage of the CRR method over BMC is that it can
derive clauses that are hard or even impossible to derive by a regular
SAT-solver~\cite{tech_rep_pqe}. Suppose, for example, that a clause
$C$ is proved $P^i$-equivalent. If $C$ eliminates a counterexample of
length $i$ it is not implied by the formula $G_i$ that BMC checks for
satisfiability. (Because $C$ eliminates an assignment satisfying
$G_i$.) So clause $C$ cannot be derived by a resolution based
SAT-solver from $G_i$.  Adding $C$ to $G_i$ only preserves the
satisfiability of the latter.

Importantly, the CRR method derives a $P^i$-equivalent clause $C$
differently from BMC even if $C$ is implied by $G_i$. We assume here
that the PQE-solver used by the CRR method to compute range
reduction formulas employs the machinery of
D-sequents~\cite{fmcad13}. Then, such a PQE-solver can produce clauses
obtained by non-resolution derivation. An example of a clause obtained
by non-resolution derivation is a blocked clause
~\cite{blocked_clause}.  Adding clauses obtained by non-resolution
derivation allows one to get proofs that are much shorter than those
based on pure resolution.  For example, in~\cite{gen_ext_resol} it was
shown that extending resolution with a rule allowing to add blocked
clauses makes it exponentially more powerful.

\section{Description of \MC}
\label{sec:pseudocode}
In this section, we describe a model checker called \mc that is based
on CRR.  To make this description simpler we omitted some obvious
optimizations.  For example, the current version of \mc discards
approximate range reduction formulas $H^*_i$,~$i=1,\dots,n$ computed
with respect to a clause $C$ after $C$ is proved $P^n$-equivalent.
Only clause $C$ itself is kept and re-used when a new clause $C'$ is
checked for being $P^n$-equivalent. In reality, formulas $H^*_i$ can
be re-used as well. The same applies to formulas $U_i$,~$i=1,\dots,n$
generated when eliminating unreachable bad states falsifying
$H^*_i$. In the current version of \mc, these formulas are discarded
after $C$ is proved $P^n$-equivalent. In reality, they can be re-used
when checking $P^n$-equivalence of a new clause $C'$.
\label{sec:mc_crr}
%
%
\subsection{Main procedure}
%
%
\setlength{\intextsep}{2pt}
\setlength{\textfloatsep}{10pt}
\begin{figure}
\small
\vspace{-10pt}
\begin{tabbing}
aaa\=bb\=cc\= dd\= \kill
// $N$ is a comb. circuit specifying transition relation \\
// $I$ is a CNF formula specifying initial states \\
// $P$ is the property to be checked \\
// $n$ is length of the longest counterexample (if any)  \\
// \mc returns a counterexample \\
//  or \ti{nil} if no counterexample exists \\
//\\
$\MC(N,I,P,n)$\{\\
\tb{\scriptsize{1}}\> $T(S,X,S') := \mi{GenCnfForm}(N)$; \\
\tb{\scriptsize{2}}\> $\mi{Trace} := \emptyset$; \\
\tb{\scriptsize{3}}\> $\mi{States} := \emptyset $; \\
\>  - - - - - - - - - - - - -\\
\tb{\scriptsize{4}}\> while (\ti{true}) \{\\
\tb{\scriptsize{5}}\Tt  $\inp{s}{x} := \mi{GenInp}(I)$; \\
\tb{\scriptsize{6}}\Tt  $\pnt{s'} := \mi{Simulate}(P,N,\pnt{s},\pnt{x});$\\
\tb{\scriptsize{7}}\Tt  if $(P(\pnt{s'}) = 0)$ return$(\mi{Trace} \cup \s{(\pnt{s},\pnt{x})})$; \\
\tb{\scriptsize{8}}\Tt  $A := \mi{MinFalsifClause}(\pnt{s},\pnt{x})$; \\
\>  - - - - - - - - - - - - -\\
\tb{\scriptsize{9}}\Tt $E := \mi{ConstrTimeFrame}(T,I,P,A,n)$; \\
\tb{\scriptsize{10}}\Tt if ($E \neq \mi{nil}$) return($\mi{Trace} \cup E$); \\
\tb{\scriptsize{11}}\Tt $\mi{Trace} := \mi{Trace} \cup \s{(\pnt{s},\pnt{x})}$; \\
\tb{\scriptsize{12}}\Tt $\mi{States} := \mi{States} \cup \pnt{s}$; \\
\tb{\scriptsize{13}}\Tt if ($\pnt{s'} \in \mi{States}$) return($\mi{nil}$); \\
\tb{\scriptsize{14}}\Tt $I := \mi{FormUnitClauses}(\pnt{s'})$;\}\} \\
\end{tabbing} 
\caption{Model checking by computing range reduction}
\label{fig:mc_crr}
\end{figure}
 The pseudo-code of \mc is given in
Figure~\ref{fig:mc_crr}.  \mc accepts a state transition system $\xi$
described by a circuit $N$ and predicates $I$ and $P$.  $N$ specifies
a transition relation and predicates $I$ and $P$ specifies initial
states and the property to verify. \mc also accepts parameter $n$
informing the model checker that one needs to check if $P$ holds for
$n$ transitions. The main parts of the code are separated by the
dotted lines. \mc starts with generating formula $T$ specifying the
transition relation represented by circuit $N$ and initializing some
variables (lines 1-3).  As we mentioned earlier, \mc reduces the set
of all traces of length at most $n$ to one trace. This is done by
keeping only one input pair per time frame processed by \MC.  The set
of these input pairs is accumulated in variable \ti{Trace}. Variable
\ti{States} collects all the states of the trace stored in
\ti{Trace}. Variables \ti{Trace} and \ti{States} are initialized to an
empty set (lines 2-3).

The main computation of \mc takes place in a \ti{while} loop (lines
4-14). The body of the loop consists of two parts. In the first part
(lines 4-8) \mc generates an input pair \IP{s}{x} that \mc does not
exclude from the current initial time frame (line 5).  Then \mc checks that
property $P$ holds for state \pnt{s'} to which system $\xi$
transitions under input pair \IP{s}{x} (lines 6-7). If \pnt{s'} breaks
$P$ then \mc returns $\mi{Trace} \cup \s{\IP{s}{x}}$ as a
counterexample. Otherwise, \mc generates $A(S,X)$, the longest clause
falsified by \IP{s}{x}. For every clause $C$ generated by procedure
\ti{ConstrTimeFrame} (line 9) to constrain input pairs of the initial
time frame, \Nmpl{C}{A} holds.  This guarantees that $C$ does not
exclude the input pair \IP{s}{x}.

\mc starts the second part of the loop (lines 9-14) calling procedure
\ti{ConstrTimeFrame}. This procedure tries to exclude all the input
pairs of the current time frame but \IP{s}{x}. If \ti{ConstrTimeFrame}
fails to do this, it returns a trace $E$ for which $P$ fails. This
trace does not include the time frames already processed by \mc. For
that reason, to form a counterexample for the original system $\xi$
one needs to take the union of \ti{Trace} and $E$. If
\ti{ConstrTimeFrame} succeeds, then either property $P$ holds for $n$
transitions or every counterexample contains the input pair \IP{s}{x}
selected in line 5 that has not been eliminated.  In this case, \mc
updates sets \ti{Trace} and \ti{States} by adding \IP{s}{x} and
\pnt{s} respectively.  Then \mc checks if the state \pnt{s'} to which
$\xi$ transitions under input \IP{s}{x} is already in the set
\ti{States} (line 13). (Recall that \ti{States} contains all the states
of \ti{Trace}.) If it is, then property $P$ holds for $n$ transitions
because the part of \ti{Trace} between two copies of state \pnt{s'}
can be repeated.  Finally, \mc eliminates the current time frame by
making \pnt{s'} the new initial state (line 14). So the second time
frame of system $\xi$ with initial states $I$ becomes the new initial
time frame of $\xi$ with initial state \pnt{s'}.

%
%
\subsection{Constraining a time frame}
%
%
\setlength{\intextsep}{2pt}
\setlength{\textfloatsep}{10pt}
\begin{figure}
\small
\begin{tabbing}
aaa\=bb\=cc\= dd\= \kill
// $E$ denotes a counterexample \\
// \\
$\mi{ConstrTimeFrame}(T,I,P,A,n)$\{\\
\tb{\scriptsize{1}}\> $G := \emptyset$; \\
\tb{\scriptsize{2}}\> while (\ti{true}) \{\\
\tb{\scriptsize{3}}\Tt   $\inp{s}{x} := \mi{SatAssgn}(G \wedge A)$;\\
\tb{\scriptsize{4}}\Tt   if $(\inp{s}{x} = \mi{nil})$ return($\mi{nil}$); \\
\tb{\scriptsize{5}}\Tt   $C := \mi{GenFalsifClause}(G,A,\pnt{s},\pnt{x})$; // $C \not\rightarrow A$ \\
\tb{\scriptsize{6}}\Tt   $E := \mi{CompRrForm}(T,I,P,C,n)$;\\
\tb{\scriptsize{7}}\Tt   if ($E \neq \mi{nil}$) return($E$);\\
\tb{\scriptsize{8}}\Tt  $G := G \wedge C$; \}\} \\
\end{tabbing} 
\caption{The \ti{ConstrTimeFrame} procedure}
\label{fig:constr_time_frame}
\end{figure}

%
The pseudo-code of procedure \ti{ConstrTimeFrame} is shown in
Figure~\ref{fig:constr_time_frame}.  The objective of this procedure
is to exclude all input pairs of the initial time frame but the input
pair falsifying clause $A$.  The set of generated $P^n$-equivalent
clauses is accumulated in formula $G$ that is initially empty.
Computation is performed in a \ti{while} loop. First,
\ti{ConstrTimeFrame} generates a new input pair \IP{s}{x} to exclude.
This input pair is formed as an assignment satisfying $G \wedge A$
(line 3).  If $G \wedge A$ is unsatisfiable, then $G$ has excluded all
the input pairs but the input pair falsifying $A$. In this case,
\ti{ConstrTimeFrame} returns \ti{nil} meaning that no counterexample
is found (line 4).

If a new input pair to exclude \IP{s}{x} is found,
\ti{ConstrTimeFrame} generates a clause $C$ that is falsified by
\IP{s}{x} (line 5).  Clause $C$ is constructed in such a way that it
does not exclude the input pair falsifying $A$.  Then procedure
\ti{CompRrForm} is called that computes range reduction formulas with
respect to clause $C$ (line 6). If \ti{CompRrForm} returns a
counterexample (line 7) it means that a bad reachable state was
excluded by one of the range reduction formulas. Otherwise, $C$ is
added to $G$ as a $P^n$-equivalent clause.
%
%
\subsection{Computing range reduction formulas}
%
%
\setlength{\intextsep}{2pt}
\setlength{\textfloatsep}{10pt}
\begin{figure}
\small
\begin{tabbing}
// $T_j = T(S_{j-1},X_{j-1},S'_{j-1})$ \\
// \\
aaa\=bb\=cc\= dd\= \kill
$\mi{CompRrForm}(T,I,P,C,n)$\{\\
\tb{\scriptsize{1}}\> $\rr := \s{H^*_0,\dots,H^*_n}$; \\
\tb{\scriptsize{2}}\> $H^*_0 := \s{C}$; \\
\tb{\scriptsize{3}}\> $\ur := \s{U_1,\dots,U_n}$; \\
\tb{\scriptsize{4}}\> $ W : = S_0 \cup X_0$; \\
\tb{\scriptsize{5}}\> $\Phi^* := I$; \\
\tb{\scriptsize{6}}\> for ($j=0;~j < n;~j\mi{++}$) \{ \\
\tb{\scriptsize{7}}\Tt   $H^*_{j+1} := \mi{SolvePQE}(\prob{W}{H^*_j \wedge T_j \wedge \Phi^*})$;\\
\tb{\scriptsize{8}}\Tt   if (\Nmpl{\overline{H^*_{j+1}}}{P}) \{ \\
\tb{\scriptsize{9}}\ttt     $(E,\ur) := \mi{ElimBadStates}(T,I,P,\rr,\ur,j\!+\!1)$; \\
\tb{\scriptsize{10}}\ttt     if ($E \neq \mi{nil}$) return($E$);\} \\
\tb{\scriptsize{11}}\Tt  if (\Impl{\overline{H^*_{j+1}}}{\overline{U}_{j+1}}) return($\mi{nil}$); \\
\tb{\scriptsize{12}}\Tt  $W := W \cup S_j \cup X_j$; \\
\tb{\scriptsize{13}}\Tt  $\Phi^* := \Phi^* \wedge H^*_j \wedge T_j$; \} \\
\tb{\scriptsize{14}}\> return($\mi{nil}$); \}\\

\end{tabbing} 
\caption{The \ti{CompRrForm} procedure}
\label{fig:comp_rr_form}
\end{figure}

%
Procedure \ti{CompRrForm} computing a set \rr=\s{H^*_0,\dots,} of
range reduction formulas is shown in Figure~\ref{fig:comp_rr_form}. We
assume that \mc employs a noisy PQE-solver. So $H^*_i$ are approximate
range reduction formulas.  For the sake of simplicity, in this section
we will refer to $H^*_i$ as just range reduction formulas. Formula
$H^*_i$ specifies states that become unreachable in $i$ transitions due
to excluding the input pairs of the initial time frame that falsify
clause $C$. Formula $H^*_0$ is just equal to clause $C$ (line 2). The
formulas $H^*_i$, $0 < i \leq n$ are initialized to an empty set of
clauses. \ti{CompRrForm} also forms the set \ur=\s{U_1,\dots,U_n}
(line 3).  Formula $U_i$ is meant to eliminate the bad states
falsifying formula $H^*_i$ that are unreachable. In other words, $U_i$
is meant to make up for the fact that \mc uses a noisy PQE-algorithm.
\ti{CompRrForm} also initializes set $W$ and formula $\Phi^*$ (lines 4-5).
They are used in formulating PQE problems to be solved (line 7).

The main computation is performed in a \ti{while} loop (lines 6-13). First,
a range reduction formula $H^*_{j+1}$ is computed by taking $H^*_j$ out of
the scope of quantifiers in formula \prob{W}{H^*_j \wedge T_j \wedge
  \Phi^*} (line 7).  Here $W$ is equal to $S_0 \cup X_0 \cup \dots \cup
S_j \cup X_j$ for $j \geq 0$ and $\Phi^*$ is equal to $I \wedge H^*_0
\wedge T_0 \dots \wedge H^*_{j-1} \wedge T_{j-1}$ for $j > 0$.  Then
\ti{CompRrForm} checks if $H^*_{j+1}$ excludes a bad state (line 8).  If
it does, then procedure \ti{ElimBadStates} is called (line 9) to check
if a bad state excluded by $H^*_{j+1}$ is reachable from an initial
state. If so, then \ti{ElimBadStates} returns a trace $E$ leading to such
a bad state (line 10).  In the process of checking if bad states
excluded by $H^*_{j+1}$ are reachable, \ti{ElimBadStates} updates
formulas $U_i$, $i = 1,\dots,j+1$ by adding new clauses eliminating
unreachable bad states.

After all bad states excluded by $H^*_{i+1}$ are eliminated by
clauses of $U_{j+1}$, \ti{CompRrForm} checks the condition of
Proposition~\ref{prop:empty_rr_form}.  Namely, it checks if all states
excluded by $H^*_{i+1}$ are eliminated as unreachable by $U_{j+1}$.
If this is the case, \ti{CompRrForm} returns \ti{nil} reporting that
$C$ is $P^n$-equivalent (line 11). Otherwise, \ti{CompRrForm} switches
to a new time frame by updating set $W$ and formula $\Phi^*$ (lines
12-13).

If none of the formulas $H^*_i$, $i=1,\dots,n$ excludes a reachable
bad state, then from  Proposition~\ref{prop:trivial_case} it follows
that $C$ is $P^n$-equivalent. So \ti{CompRrForm} returns \ti{nil}
(line 14).

%
%
\subsection{Searching for a counterexample}
%
%
\setlength{\intextsep}{2pt}
\setlength{\textfloatsep}{10pt}
\begin{figure}
\small
\begin{tabbing}
aaa\=bb\=cc\= dd\= \kill
$\mi{ElimBadStates}(T,I,P,\rr,\ur,j)$\{\\
\tb{\scriptsize{1}}\> while ($\mi{true}$) \{ \\ 
\tb{\scriptsize{2}}\Tt  \pnt{p}=$\mi{SatAssgn}(\overline{H^*_j} \wedge U_j \wedge \overline{P})$;\\
\tb{\scriptsize{3}}\Tt  if ($\pnt{p} = \mi{nil}$) return($\mi{nil},\ur$); \\
\tb{\scriptsize{4}}\Tt  \pnt{s_j} := $\mi{ExtrState}$(\pnt{p}); \\
\tb{\scriptsize{5}}\Tt $(E,C) := \mi{PropBack}(T,I,P,\rr,\ur,\pnt{s_j},j)$;\\
\tb{\scriptsize{6}}\Tt if ($E \neq \mi{nil}$) return($E,\ur$);\\
\tb{\scriptsize{7}}\Tt $U_j := U_j \wedge C$;\}\}   \\
\end{tabbing} 
\caption{The \ti{ElimBadStates} procedure}
\label{fig:elim_bad_states}
\end{figure}
 
%
\mc{} searches for a counterexample by
calling procedure \ti{ElimBadStates}
(Figure~\ref{fig:elim_bad_states}) that, in turn, calls procedure
\ti{PropBack} (Figure~\ref{fig:prop_back}).
The objective of \ti{ElimBadStates} is to show that the bad states
excluded by a range reduction formula $H^*_j$ are unreachable.  This is
done in a \ti{while} loop. First, \ti{ElimBadStates} checks if formula
$\overline{H}_j \wedge U_j \wedge \overline{P}$ is satisfiable (line
2).  Suppose that a satisfying assignment \pnt{p} is found.  Then one
can extract a bad state \pnt{s} from \pnt{p} that is excluded by the
range reduction formula $H^*_j$ and satisfies $U_j$ (line 4).  The
latter means that \pnt{s} has not been proved unreachable yet.  Then
procedure \ti{PropBack} returns a trace from an initial state to
\pnt{s} (if any). If such a trace $E$ exists then \ti{ElimBadStates}
terminates returning $E$. Otherwise, \ti{PropBack} returns a clause
$C$ that is falsified by \pnt{s} thus proving that \pnt{s} is
unreachable. Clause $C$ is added to $U_j$ and a new iteration starts.

%
%
\setlength{\intextsep}{2pt}
\setlength{\textfloatsep}{10pt}
\begin{figure}
\small
\begin{tabbing}
// $E$ denotes a counterexample \\
// \\
aaa\=bb\=cc\= dd\= \kill
$\mi{PropBack}(T,I,P,\rr,\ur,\pnt{s_k},k)$\{\\
\tb{\scriptsize{1}}\> $E := \emptyset$; \\
\tb{\scriptsize{2}}\> $j := k$; \\ 
\tb{\scriptsize{3}}\> while ($\mi{true}$) \{\\
\tb{\scriptsize{4}}\Tt   if ($j = 0$) return($E,\mi{nil}$);\\
\tb{\scriptsize{5}}\Tt   $(\pnt{p},\mi{Proof}) := \mi{SatAssgn}(U_{j-1} \wedge T \wedge \mi{Cnf}(\pnt{s_j}))$; \\
\tb{\scriptsize{6}}\Tt   if ($\pnt{p} = \mi{nil}$) \{ \\
\tb{\scriptsize{7}}\ttt     $C := \mi{FormClause}(\mi{Proof},
\mi{Cnfs}(\pnt{s_j}))$; \\
\tb{\scriptsize{8}}\ttt     if ($j = k$) return($\mi{nil},C$); \\
\tb{\scriptsize{9}}\ttt     $U_j := U_j \wedge C$; \\
\tb{\scriptsize{10}}\ttt     $E := E \setminus \s{(\pnt{s_j},\pnt{x_j})}$; \\
\tb{\scriptsize{11}}\ttt    $j := j+1$; \\
\tb{\scriptsize{12}}\ttt    continue; \}\\
\tb{\scriptsize{13}}\Tt    $(\pnt{s_{j-1}},\pnt{x_{j-1}}) = \mi{ExtrInpPair}(\pnt{p})$; \\
\tb{\scriptsize{14}}\Tt   $E := E \cup \s{(\pnt{s_{j-1}},\pnt{x_{j-1}})}$; \\
\tb{\scriptsize{15}}\Tt   $j := j-1$;\}\} \\
\end{tabbing} 
\caption{The \ti{PropBack} procedure}
\label{fig:prop_back}
\end{figure}
 
%
The goal of procedure \ti{PropBack} (Figure~\ref{fig:prop_back}) is to
find an initialized trace $E$ leading to the bad state \pnt{s_k}.
Initially $E$ is an empty set (line 1). Trace $E$ is built in the
reverse order. So index $j$ specifying the current time frame is
initialized to $k$ (line 2). The main computation is done in a
\ti{while} loop (lines 3-15).  If $j$ is equal to 0, then the
construction of $E$ is over (line 4). Otherwise, \ti{PropBack} checks
if formula $U_{j-1} \wedge T \wedge \mi{Cnf}(\pnt{s_j}))$ is
satisfiable.  Here $\mi{Cnf}(\pnt{s_j})$ is the set of unit clauses
specifying state \pnt{s_j}. The existence of a satisfying assignment
\pnt{p} means that one can extract an input pair \IP{s_{j-1}}{x_{j-1}}
from \pnt{p} such that
\begin{itemize}
\item \pnt{s_{j-1}} satisfies $U_{j-1}$ and hence is not proved
  unreachable yet
\item system $\xi$ transitions to state \pnt{s_j} under input pair
  \IP{s_{j-1}}{x_{j-1}}.
\end{itemize}

Assume that \pnt{p} does not exist. In this case a resolution proof of
unsatisfiability \ti{Proof} is generated and \ti{PropBack} performs
actions shown in lines 7-12. First, a clause $C$ falsified by
\pnt{s_j} is built. The simplest way to construct $C$ is to negate
$\mi{Cnf}(\pnt{s_j})$. A shorter clause can be generated by excluding
from $C$ the literals that correspond to the unit clauses of
$\mi{Cnf}(\pnt{s_j})$ that were not used in \ti{Proof}. If $j = k$,
then $C$ is falsified by the target state \pnt{s_k} thus proving the
latter unreachable (line 8).  Otherwise, \ti{PropBack} adds $C$ to
$U_j$. After that the input pair \IP{s_j}{x_j} is removed from $E$,
index $j$ is incremented by 1 and a new iteration starts (lines
10-12).

If a satisfying assignment \pnt{p} exists, then an input pair
\IP{s_{j-1}}{x_{j-1}} is extracted from \pnt{p} and added to $E$
(lines 13-14).  The value of $j$ is decremented by 1 and a new
iteration begins (line 15).

\section{Experimental results}
\label{sec:experiments}
The key operation of \mc is to compute a range reduction formula by
running a PQE-solver. In this section, we describe experiments meant
to show the viability of using a PQE-solver for computing range
reduction.  We will conduct a more thorough experimental study once
\mc is implemented.  In the experiments, we used the PQE-algorithm
called \pqe~\cite{tech_rep_pqe}.

In addition to showing the viability of using PQE for computing range
reduction, the experiments pursued three other goals. The first goal was
to show that PQE can be much more efficient than QE. The second
goal was to demonstrate that reducing the noise generated by a PQE-solver
can significantly improve its performance. This third goal was to
show that the set of states excluded by range reduction formulas
is drastically smaller than the set of reachable states.

%
%
\subsection{Using PQE-solver to compute range reduction}
In this subsection, we describe experiments with computing range
reduction by a PQE-solver. In these experiments, we used 758
benchmarks of HWMCC-10 competition.  Let $N(X,S,S')$ be the circuit
representing a transition relation and $T$ be a CNF formula specifying
$N$. Recall that $X$ denotes the input combinational variables and
$S,S'$ denote the present and next state variables.  (For the sake of
simplicity we do not mention the internal combinational variables of
$N$.)  So $S \cup X$ and $S'$ specify the input and output variables
of $N$ respectively.

In experiments, we computed the range reduction of $N$ caused by
excluding the input pairs \IP{s}{x} falsifying a clause $C(S,X)$.  We
used two methods of computing range reduction. The first method was just
to run \PQE on formula \prob{W}{C \wedge T} where $W = S \cup X$.  The
second method first optimized clause $C$ to reduce the amount of noise
generated by \pqe. This optimization was performed by a technique 
called \ti{clause expansion}.  The idea of clause expansion is to
replace $C$ with a clause $C \vee lit(v)$ if the clause $C \vee
\overline{\mi{lit}}(v)$ is implied by $T$. Here $\mi{lit}(v)$ is a
literal of variable $v$. It is not hard to show that, in this case,
taking out clause $C$ from \prob{W}{C \wedge T} is equivalent to
taking out $C'$ from \prob{W}{C' \wedge T} where $C' = C \vee
\mi{lit}(v)$.  The objective of replacing $C$ with $C \vee \mi{lit}(v)$
is to reduce noise generation by removing the part of $C$ that is
implied by $T$. Note that clause $C \vee \mi{lit}(v)$ can be further
expanded.  So, in the second method, \PQE was applied to formula
\prob{W}{C^* \wedge T} where $C^*$ was obtained from $C$ by adding
literals. We used a very efficient procedure of clause expansion. We
omit the details of this procedure.

\setlength{\intextsep}{4pt}
\begin{figure}
 \begin{center}
\includegraphics[width=3in,height=2in]{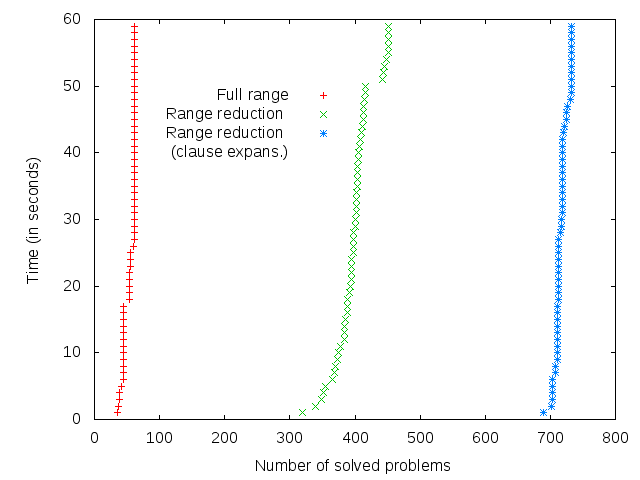}
  \end{center}
\caption{Computing full range and range reduction}
\vspace{5pt}
\label{fig:range_reduction}
\end{figure}

For every transition relation out of 758, we generated a random clause
$C$ of length $|S|$ consisting of literals of $S$.  (Note that the
total set of input variables of circuit $N$ specifying a transition
relation is $S \cup X$. So clause $C$ excluded $2^k$ input assignments
where $k=|X|$.  In many transition relations $k$ was greater than
100.)  Then we tried to check range reduction by the two methods
above. We ran many experiments generating different clauses for the
same transition relation.  Here are the results of a typical run
consisting of 758 problems where for every transition relation one
clause of $|S|$ literals was generated randomly.  When using the first
method, only 452 out of 758 problems were finished within the 60s time
limit. The second method succeeded in 733 out of 758 problems. Most of
them were finished within a second.

Let $H(S')$ and $H^*(S')$ denote a noise-free and noisy solution to
the PQE problem \prob{W}{C \wedge T} respectively.  That is $H^*
\wedge \prob{W}{T} \equiv H \wedge \prob{W}{T} \equiv \prob{W}{C
  \wedge T}$ and \Impl{H^*}{H}. Our PQE-solver \PQE is noisy. So it
generates formula $H^*$ rather than $H$. In 643 out of 733 problems
solved by the second method, $H^*$ was empty meaning that no range
reduction occurred. In this case $H^* \equiv H$. In the remaining 90
solved problems, in 2 cases, $T \rightarrow H^*$ held. So here, no
range reduction occurred either and \PQE just generated noise. In 88
cases, $T \rightarrow H^*$ did not hold, indicating that the range of
$T$ reduced.

The performance of these two methods of computing range reduction in
the run we described above is shown in
Figure~\ref{fig:range_reduction}. This figure also provides data on
computing the \ti{full range} of transition relations for the 758
HWMCC-10 benchmarks. As we discussed in
Subsection~\ref{subsec:qe_pqe_range}, finding the full range of a
combinational circuit reduces to QE. So comparing methods for
computing full range and range reduction is a way to compare QE and
PQE.  In the experiments, we computed full range by the QE-algorithm
called \CDI ~\cite{fmcad13}.  With the time limit of 60s, \Cdi
finished only for 62 transition relations.

Figure~\ref{fig:range_reduction} shows the number of problems finished
in a given amount of time.  This data indicates that PQE can be
dramatically more efficient that QE.  One more conclusion that can be
drawn from Figure~\ref{fig:range_reduction} is that reducing noise
generation, e.g. by clause expansion, can have a drastic effect on the
performance of a PQE algorithm.

%
%
\subsection{Comparing set of states describing range
 reduction  with that of reachable states} 
Let $N$ be a circuit specifying a transition relation.  Let $H$ be a
noise-free formula describing the range reduction of  $N$ caused
by constraining inputs with clause $C$. Assume that $C$ depends only
on state variables. In terms of Subsection~\ref{subsec:main_idea}, the
set of states falsifying $H$ can be represented as $A \setminus
B$. Here $A$ consists of the states that are reachable from the states
falsified by $C$ in one transition.  The set $B$ is a subset of
$A$. It consists of the states of $A$ that are also reachable in one
transition from states \ti{satisfying} clause $C$. So $A \setminus B$
consists of states that are reachable \ti{only} from states falsifying
clause $C$.

In Subsection~\ref{subsec:main_idea}, we conjectured that the set $A
\setminus B$ can be dramatically smaller than sets $A$ and $B$. In
this subsection, we check this conjecture by comparing the size of the
set $A \setminus B$ and $A$ experimentally. Given a clause $C$,
computing the set $A \setminus B$ comes down to finding a range
reduction formula $H$. Formula $H$ is obtained by taking $C$ out of
the scope of quantifiers in \prob{W}{C \wedge T}.  Since \PQE produces
a noisy solution $H^*$, we considered only the cases where $H^*$ was
empty and hence $H^* \equiv H$.  Building set $A$ comes down to
finding a quantifier-free formula $G(S')$ that is logically equivalent
to \prob{W}{\overline{C} \wedge T}. (So finding $G$ reduces to the QE
problem.) The set $A$ is specified by the complete assignments to $S'$
satisfying $G$.  To estimate the size of $A$ we generated a limited
number of cubes containing satisfying assignments of $G$. The size of
the largest cube was used as a lower bound on the size of set $A$.

To compute formula $G$ we used our QE solver called \Cdi mentioned
above. In this experiment, we used the same 758 transition relations
of the HWMCC-10 benchmark set.  To make generation of formula $G$ less
trivial we generated a clause of of $0.7 \times |S|$ literals (as
opposed to clauses of $|S|$ literals generated in the previous
experiment). 

Here are the results of a typical experiment consisting of solving 758
PQE and 758 QE problems. Every PQE problem is to take $C$ out of the
scope of quantifiers in \prob{W}{C \wedge T}. The corresponding QE
problem is to eliminate quantifiers in formula \prob{W}{\overline{C}
  \wedge T}.  With the time limit of 60s, \PQE solved 561 PQE problems
while \Cdi solved only 377 QE problems. In 490 PQE problems, an empty
formulas $H^*$ were generated i.e. set $A \setminus B$ was empty.  In
347 out of these 490 cases, the corresponding QE problem was solved by
\CDI. In 211 out of 347 cases, the size of set $A$ was larger than
$2^{30}$ states. In 92 out of 347 cases, the size of set $A$ was larger
than $2^{100}$ states.

%
%
\vspace{5pt}
\begin{table}[htb]
\small
\caption{\ti{Estimating the size of set $A$ for  some concrete examples}}
\setlength\extrarowheight{2pt}
\vspace{-10pt}
\scriptsize
\begin{center}
\begin{tabular}{|p{42pt}|p{20pt}|l|p{14pt}|p{12pt}|p{18pt}|p{28pt}|} \hline
 benchmark & \#X-in- & \#lat- & \#gates & PQE & QE   & size   \\ 
           &   puts &  ches       & & (s.) & (s.)   & of $A$   \\  \hline

brpp1neg      & 86  & 138 & 1,244 & 0.01 & 0.01  & $> 2^{84}$   \\ \hline

eijks1423 & 17 & 157  & 1,101  & 0.01  & 18  & $ > 2^{31}$    \\ \hline

bc57sensorsp0 & 97  & 167 &1,691  & 0.01  & 0.4  &$>2^{105}$    \\ \hline

irstdme6 & 220 & 245 & 1,713 & 0.01  & 0.2  & $ > 2^{101}$   \\ \hline

csmacdp0neg  &  146 & 265  & 5,247 & 0.01  & 13 & $>2^{169}$   \\ \hline

139452p24  & 225 & 314 & 5,867 & 0.03 & 0.1  & $>2^{240}$   \\ \hline

pj2013 & 1,305 & 1,271 & 35,630  & 0.2 & 0.1  & $>2^{1231}$    \\ \hline

neclaftp1001 & 32 & 7,880 & 63,383  & 0.3  & 3.8  & $> 2^{2252}$   \\ \hline

\end{tabular}                
\label{tbl:concr_exmps}
\end{center}
\end{table}

Some concrete results are shown in Table~\ref{tbl:concr_exmps}.  The
first column gives benchmark names. The next three columns specify the
size of the circuit $N$ specifying a transition relation, the column
\ti{\#X-inputs} giving the number of combinational inputs of $N$. The
next two columns give the time taken to solve the corresponding PQE
and QE problems in seconds. The last column provides the lower bound
on the size of set $A$.  For all the examples listed in
Table~\ref{tbl:concr_exmps} the set $A \setminus B$ was empty. The
results of Table~\ref{tbl:concr_exmps} show that the size of the set
$A \setminus B$ can be very small even when sets $A$ and $B$ are very
large.

\section*{Acknowledgment}
This research was supported in part by DARPA under AFRL
Cooperative Agreement No.~FA8750-10-2-0233 and by NSF grants
CCF-1117184 and CCF-1319580.
\section{Conclusions}
\label{sec:conclusions}
We presented a new method of model checking based on the idea of
Computing Range Reduction (CRR). The CRR method repetitively applies
an operation that reduces the set of possible behaviors.  Given a
number $n$ and a property $P$, the CRR method finds a counterexample
of length at most $n$ or proves that such a counterexample does not
exit.  A key feature of the CRR method is that it has a natural way to
do bug hunting in a depth-first manner.

The results of this paper lead to the following conclusions.  First,
computing an \ti{under-approximation} of available behaviors is
complementary to current methods that over-approximate such behaviors.
Computing an under-approximation seems to be a reasonable idea in case
of bug-hunting, because one of the main concerns here is to reduce the
search space.  Second, a successful bug hunting tool should be able to
efficiently perform depth-first search. Third, the scalability issues
of model checkers based on Quantifier Elimination (QE) are caused by
the fact that QE is an inherently hard problem and should be avoided.
The key operation of the CRR method is based on \ti{partial} QE.  In
many cases, the partial QE problem can be solved dramatically more
efficiently than its QE counterpart.  Fourth, the partial QE problem
cannot be efficiently solved by a typical SAT-solver based on the
notion of logical inconsistency rather than unobservability. So,
development of non-SAT methods of model checking is very important.

\appendix
\setcounter{proposition}{0}

The appendix contains   proofs of the propositions  listed in the paper.
We also give proofs of lemmas  used in the proofs of propositions.

\section*{Propositions of Section~\ref{sec:pqe}: Partial Quantifier Elimination}
%
%
\begin{proposition}
Let $R(Z)$ be a CNF formula such that $R \equiv \prob{W}{G}$ where
$W = X \cup Y$. (That is $R$ is a solution to the QE problem.)
Then the assignments satisfying $R(z)$ specify the range of $M$.
\end{proposition}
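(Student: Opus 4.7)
The plan is to unwind the two definitions (solution to the QE problem and range of a circuit) and show they are forced to agree by the semantics of Tseitsin encodings. Concretely, I will show for every complete assignment $\pnt{z}$ to $Z$ the chain of equivalences: $R(\pnt{z})=1 \iff \exists \pnt{x}\,\exists \pnt{y}\, G(\pnt{x},\pnt{y},\pnt{z})=1 \iff \pnt{z}$ lies in the range of $M$.

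The first equivalence is immediate from the hypothesis $R \equiv \prob{W}{G}$ with $W = X \cup Y$: by definition of quantifier elimination, $R$ evaluates to $1$ on $\pnt{z}$ exactly when $\prob{W}{G}$ does, i.e.\ when some assignment to $W$ together with $\pnt{z}$ satisfies $G$.

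For the second equivalence I would invoke the standard semantic property of a Tseitsin encoding of $M$. The encoding is built so that (i) for every complete input assignment $\pnt{x}$ there is a \emph{unique} assignment $\pnt{y}^*$ to the intermediate variables, namely the one obtained by evaluating the gates of $M$ on $\pnt{x}$; and (ii) with that $\pnt{y}^*$, the clauses of $G$ force $\pnt{z}$ to equal $M(\pnt{x})$. Hence for any $\pnt{x}$ and $\pnt{z}$: there exists $\pnt{y}$ with $G(\pnt{x},\pnt{y},\pnt{z})=1$ if and only if $M(\pnt{x}) = \pnt{z}$. Existentially quantifying over $\pnt{x}$ gives: $\exists \pnt{x}\,\exists \pnt{y}\, G(\pnt{x},\pnt{y},\pnt{z})=1$ iff there exists $\pnt{x}$ with $M(\pnt{x})=\pnt{z}$, which is the definition of $\pnt{z}$ being in the range of $M$.

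Combining the two equivalences yields $R(\pnt{z})=1$ iff $\pnt{z}$ is in the range of $M$, as required. The only non-routine step is the second equivalence, which rests on the defining property of a Tseitsin encoding; I would either cite this as a standard fact about Tseitsin transformations or, if a self-contained argument is needed, isolate it as a short lemma stating that for any complete input $\pnt{x}$ the formula $G(\pnt{x},Y,Z)$ has a unique satisfying assignment $(\pnt{y}^*,\pnt{z}^*)$ with $\pnt{z}^* = M(\pnt{x})$. No further case analysis or computation is required.
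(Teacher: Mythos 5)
Your proposal is correct and follows essentially the same route as the paper's proof: both reduce the claim to the observation that $\pnt{z}$ is in the range of $M$ iff some $(\pnt{x},\pnt{y},\pnt{z})$ satisfies $G$, combined with the definition of $R \equiv \prob{W}{G}$. The only difference is presentational — you state the Tseitsin-encoding property explicitly as the key lemma, while the paper uses it silently in its two-case analysis — which, if anything, makes your version slightly more self-contained.
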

\begin{proof}
Let us show that $R$ indeed specifies the range of $M$. Let \pnt{z} be
a complete assignment to $Z$ that is in the range of $M$. Then there
is an assignment $(\pnt{x},\pnt{y},\pnt{z})$ satisfying $G$ and hence
\prob{W}{G} evaluates to 1 when variables $Z$ are assigned as in 
\pnt{z}. Hence $R(\pnt{z})$ has to be equal to 1.  Now assume that
\pnt{z} is not in the range of $M$. Then no assignment
$(\pnt{x},\pnt{y},\pnt{z})$ satisfies $G$. So \prob{W}{G} evaluates to
0 for assignment \pnt{z}. Then $R(\pnt{z})$ has to be equal to 0.
\end{proof}

%
%
\begin{proposition}
Let $C(X)$ be a clause depending only on input variables of $M$.  Let
$H(Z)$ be a CNF formula such that $H \wedge \prob{W}{G} \equiv
\prob{W}{C \wedge G}$ where $W = X \cup Y$. (That is $H$ is a solution
to the PQE problem.)  Let $H$ and $H^*$  be  a noise-free and noisy 
solution respectively.  Then
\begin{enumerate}
\item The assignments \ti{falsifying} $H$ specify the range reduction
  in $M$ caused by excluding inputs falsifying $C$. That is $H(\pnt{z})=0$
  iff
\begin{itemize}
   \item there is an input \pnt{x} for which circuit $M$ produces
   output \pnt{z} 
   \item all inputs for which $M$ produces
   output \pnt{z} falsify $C$
\end{itemize}
\item \Impl{H^*}{H}
\end{enumerate}
\end{proposition}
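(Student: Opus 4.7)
The plan is to first decode the formulas $\exists W[G]$ and $\exists W[C\wedge G]$ semantically, then use the PQE equivalence to characterize the truth value of $H(\pnt{z})$ for $\pnt{z}$ in the range, and finally use noise-freeness to eliminate spurious values of $H$ outside the range.

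First I would observe that because $G$ comes from Tseitsin encoding $M(X,Y,Z)$, a complete assignment $(\pnt{x},\pnt{y},\pnt{z})$ satisfies $G$ iff $\pnt{z}=M(\pnt{x})$ with $\pnt{y}$ the induced intermediate values. Consequently, for any $\pnt{z}$, $\exists W[G]$ evaluates to $1$ exactly when $\pnt{z}\in\mathrm{range}(M)$ (by Proposition~1), and $\exists W[C\wedge G]$ evaluates to $1$ exactly when some input $\pnt{x}$ satisfying $C$ produces $\pnt{z}$. Plugging these into the PQE equivalence $H\wedge\exists W[G]\equiv \exists W[C\wedge G]$ gives, for every $\pnt{z}\in\mathrm{range}(M)$, the identity $H(\pnt{z})=[\exists \pnt{x}\colon C(\pnt{x})=1,\ M(\pnt{x})=\pnt{z}]$. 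Thus $H(\pnt{z})=0$ with $\pnt{z}$ in the range already forces the two bullets of part~(1): $\pnt{z}$ is produced by some input, and every such input falsifies $C$.

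The remaining direction of part~(1) is ruling out the pathological case $H(\pnt{z}_0)=0$ with $\pnt{z}_0\notin\mathrm{range}(M)$, and this is the step where noise-freeness does the work (and where I expect the only real subtlety). Here is the argument I would give: if $H(\pnt{z}_0)=0$, pick a clause $C_H$ of $H$ falsified by $\pnt{z}_0$; extend $C_H$ by adding, for every $z_i\in Z\setminus\V{C_H}$, the literal of $z_i$ falsified by $\pnt{z}_0$, to obtain $C_H^{\mathit{ext}}$. Then $C_H^{\mathit{ext}}$ is falsified by exactly one assignment to $Z$, namely $\pnt{z}_0$. Since $\pnt{z}_0\notin\mathrm{range}(M)$, no satisfying assignment of $G$ projects to $\pnt{z}_0$ on $Z$, so $G\rightarrow C_H^{\mathit{ext}}$. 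If $C_H$ is already implied by $G$, it is noise; otherwise the extension $C_H^{\mathit{ext}}$ witnesses that $C_H$ is noisy. In either case $H$ is not noise-free, a contradiction. Hence every $\pnt{z}$ outside the range has $H(\pnt{z})=1$, completing part~(1).

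For part~(2), the same PQE equivalence holds for $H^*$ in place of $H$, so for $\pnt{z}\in\mathrm{range}(M)$ we get $H^*(\pnt{z})=H(\pnt{z})$; in particular the implication $H^*(\pnt{z})\rightarrow H(\pnt{z})$ is trivially true there. For $\pnt{z}\notin\mathrm{range}(M)$, part~(1) guarantees $H(\pnt{z})=1$, so $H^*(\pnt{z})\rightarrow H(\pnt{z})$ is again immediate. Taking the conjunction over all $\pnt{z}$ yields \Impl{H^*}{H}. The one step I would be most careful about is the extension construction in the noise-freeness argument, since it relies on having $Z$ as the exact set of free variables of the PQE instance and on interpreting ``extension by literals of $Z\setminus\V{C_H}$'' as in the definition of a noise-free clause.
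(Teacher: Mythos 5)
Your proof is correct and follows essentially the same route as the paper's: the PQE equivalence pins down $H(\pnt{z})$ uniquely for $\pnt{z}$ in the range of $M$, and noise-freeness forces $H(\pnt{z})=1$ off the range, with part (2) following because $H^*$ must agree with $H$ on the range. The only difference is that where the paper simply asserts that noise-freeness implies $H(\pnt{z})=1$ outside the range, you actually justify it via the explicit extension of a falsified clause of $H$ to a clause implied by $G$ --- a worthwhile elaboration, and your construction correctly matches the paper's definition of a noisy clause.
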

\begin{proof}
\tb{First condition.} Let us prove that $H$ indeed specifies the range
reduction of $M$. Let \pnt{z} be a complete assignment to $Z$ that is
in the range of $M$.  Assume that \pnt{z} remains in the range of $M$
even if the inputs falsifying clause $C$ are excluded. Then there is
an assignment $(\pnt{x},\pnt{y},\pnt{z})$ satisfying $C \wedge G$ and
hence \prob{W}{C \wedge G} evaluates to 1 when variables of $Z$ are
assigned as in \pnt{z}. So, $H(\pnt{z})$ has to be equal to 1.

Now assume that \pnt{z} is in the range of $M$ but it is not in the
range of $M$ if the inputs falsifying clause $C$ are excluded. Then no
assignment $(\pnt{x},\pnt{y},\pnt{z})$ satisfies $C \wedge G$ and
hence \prob{W}{C \wedge G} evaluates to 0 when variables of $Z$ are
assigned as in \pnt{z}. On the other hand, since \pnt{z} is in the
range of $M$, there is an assignment $(\pnt{x},\pnt{y},\pnt{z})$
satisfying $G$.  So formula \prob{W}{G} is equal to 1 when variables
of $Z$ are assigned as in \pnt{z}. Since $H \wedge \prob{W}{G}$ is
equal to 0 when variables of $Z$ are assigned as in \pnt{z}, then
$H(\pnt{z})$ has to be equal to 0.

 Now assume that \pnt{z} is not in the range of $M$. Then no
 assignment $(\pnt{x},\pnt{y},\pnt{z})$ satisfies $G$. So
 both \prob{W}{G} and \prob{W}{C \wedge G} evaluate to 0. This means
 that the value of $H(\pnt{z})$ is, in general, not defined. However,
 since we require $H$ to be a noise-free solution, $H(\pnt{z})$ has to
 be equal to 1.

\tb{Second condition.} As we showed above, any solution to the PQE
problem is defined uniquely for a complete assignment \pnt{z} to $Z$
that is in the range of $M$. So in this case,
$H(\pnt{z})=H^*(\pnt{z})$.  If \pnt{z} is not in the range of $M$, by
definition of a noise-free solution, $H(\pnt{z})=1$.  So
$H(\pnt{z})=0$ implies $H^*(\pnt{z})=0$  and hence \Impl{H^*}{H}.
\end{proof}

\section*{Propositions of Section~\ref{sec:isol_publ_traces}: Isolated And Public Traces}
%
%
\begin{lemma}
\label{lemma:trace}
Let $H_1,\dots,H_m$ be range reduction formulas computed with respect
to clause $C$. Let $E$ be an initialized trace \tr{s}{x}{0}{m}
such that 
\begin{itemize}
\item \IP{s_0}{x_0} falsifies $C$ i.e. $E$ is excluded by $C$
\item \IP{s_i}{x_i} falsifies $H_i$, $i=1,\dots,m$.
\end{itemize} 
  Let $E'$ be an initialized
trace \tr{s'}{x'}{0}{m} that is allowed by clause $C$.  Then
$H_i(\pnt{s'_i})=1$, $i=1,\dots,m$ and hence $\pnt{s_i} \neq
\pnt{s'_i}, i=1,\dots,m$.
\end{lemma}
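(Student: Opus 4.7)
The plan is to apply Definition~\ref{def:rr_formulas} directly to each $s'_i$ by exhibiting a witness trace of length $i$ that reaches $s'_i$ and is \emph{not} excluded by $C$. That witness will be the length-$i$ prefix of $E'$.

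First, I would fix an index $i \in \{1,\dots,m\}$ and consider the prefix $E'_i := (\pnt{s'_0},\pnt{x'_0}),\dots,(\pnt{s'_{i-1}},\pnt{x'_{i-1}})$ of $E'$. Because $E'$ is a trace of $\xi$, the transition relation is satisfied at every step of $E'_i$, so $E'_i$ is itself a trace, and the state it reaches (in the sense of Definition~2) is precisely $\pnt{s'_i}$. Hence $\pnt{s'_i}$ is reachable in $i$ transitions, which settles the first clause of Definition~\ref{def:rr_formulas}.

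Next, I would argue that $E'_i$ is not excluded by $C$. By hypothesis, $E'$ is allowed by $C$, which by Definition~\ref{def:excl_traces} means that its initial input pair $(\pnt{s'_0},\pnt{x'_0})$ satisfies $C$. Since $E'_i$ shares this same initial input pair, it, too, is allowed by $C$. Consequently, the second clause of $H_i(\pnt{s'_i})=0$ in Definition~\ref{def:rr_formulas}, namely that \emph{every} trace of length $i$ reaching $\pnt{s'_i}$ is excluded by $C$, fails. Therefore $H_i(\pnt{s'_i}) \neq 0$, i.e., $H_i(\pnt{s'_i})=1$.

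Finally, combining $H_i(\pnt{s'_i})=1$ with the assumption $H_i(\pnt{s_i})=0$ immediately yields $\pnt{s_i}\neq\pnt{s'_i}$ for every $i=1,\dots,m$, which is the conclusion. The argument is essentially a direct unwinding of definitions; the only subtlety worth pausing on is confirming that ``allowed by $C$'' is a property of the initial input pair alone, so that it transfers from $E'$ to its prefixes $E'_i$ of every length $i\leq m$. Once that observation is made, no further technical obstacle remains.
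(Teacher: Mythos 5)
Your proof is correct and follows the same route as the paper's: the paper likewise observes that $\pnt{s'_i}$ is reached by a trace allowed by $C$ (implicitly, the length-$i$ prefix of $E'$) and concludes $H_i(\pnt{s'_i})=1$ from Definition~\ref{def:rr_formulas}. You simply make explicit the two details the paper leaves tacit — that the prefix is itself a trace reaching $\pnt{s'_i}$, and that being allowed by $C$ depends only on the initial input pair and so transfers to prefixes.
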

\begin{proof}
Since \pnt{s'_i} is in $E'$, it is reachable by a trace allowed by
$C$. From Definition~\ref{def:rr_formulas} it follows that
$H_i(\pnt{s'_i})=1$.
\end{proof}

%
%
\begin{proposition}
Let $C(S,X)$ be a clause such that \Impl{\overline{C}}{I}. Let
$H_1,\dots,H_m$ be range reduction formulas computed with respect to
clause $C$. Let $E$ denote an initialized trace \tr{s}{x}{0}{m} such
that
\begin{itemize}
\item \IP{s_0}{x_0} falsifies $C$ i.e. $E$ is excluded by $C$
\item \IP{s_i}{x_i} falsifies $H_i$, $i=1,\dots,m$.
\end{itemize} 
 Then $E$ is isolated with respect to $C$.
\end{proposition}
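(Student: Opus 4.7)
The plan is to argue by contradiction, fixing an index $i$ and using the characterization of $H_i$ from Definition~\ref{def:rr_formulas} (together with Lemma~\ref{lemma:trace}) to rule out the existence of an allowed trace of length $i$ reaching $\pnt{s_i}$. First I would unpack the notion of isolation from Definition~\ref{def:isol_trace}: $E$ is isolated with respect to $C$ exactly when, for every $i$ with $0 < i \leq m$, there is no initialized trace of length $i$ that is allowed by $C$ and whose reached state equals $\pnt{s_i}$. So it suffices to fix an arbitrary such $i$ and derive a contradiction from the assumption that some allowed trace $E' = \IP{s'_0}{x'_0},\dots,\IP{s'_{i-1}}{x'_{i-1}}$ reaches $\pnt{s_i}$.

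Next I would bring in the hypothesis that $\IP{s_i}{x_i}$ falsifies $H_i$. Since $H_i$ is a range reduction formula computed with respect to $C$, Definition~\ref{def:rr_formulas} characterizes the states it excludes as exactly those states that are reachable in $i$ transitions but only by traces of length $i$ that are excluded by $C$. Applied to $\pnt{s_i}$, this says every trace of length $i$ reaching $\pnt{s_i}$ must be excluded by $C$. But $E'$ is a trace of length $i$ reaching $\pnt{s_i}$ that is allowed by $C$, contradicting this conclusion. This is essentially the same mechanism used in the proof of Lemma~\ref{lemma:trace}, where an allowed trace reaching a state in the $i$th time frame forces $H_i$ to evaluate to $1$ there; here we simply invoke the converse direction at the single index $i$ of interest.

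Since the contradiction was obtained for an arbitrary $i$ in the range $0 < i \leq m$, no state $\pnt{s_i}$ of $E$ at a positive index can be reached by a trace of length $i$ allowed by $C$, so $E$ is isolated with respect to $C$ by Definition~\ref{def:isol_trace}. The proof is therefore a short application of Definition~\ref{def:rr_formulas} read in the opposite direction from Lemma~\ref{lemma:trace}, and the only subtlety worth double-checking is that Definition~\ref{def:rr_formulas} treats $H_i$ as a semantic object (the noise-free formula), so the argument works verbatim under the propositions hypothesis that $H_1,\dots,H_m$ are genuine range reduction formulas. The main obstacle, such as it is, is to be careful with the indexing conventions: the trace $E'$ has $i$ input pairs and transitions to $\pnt{s_i}$ (not $\pnt{s'_{i-1}}$), matching the phrase ``reachable in $i$ transitions'' in Definition~\ref{def:rr_formulas}.
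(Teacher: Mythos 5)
Your proof is correct and follows essentially the same route as the paper's: the paper factors the key step into Lemma~\ref{lemma:trace} (an allowed trace forces $H_i(\pnt{s'_i})=1$, hence $\pnt{s'_i}\neq\pnt{s_i}$), while you inline the same use of Definition~\ref{def:rr_formulas}, reading $H_i(\pnt{s_i})=0$ as saying every length-$i$ trace reaching $\pnt{s_i}$ is excluded by $C$ and contradicting the assumed allowed trace. These are the same argument up to contraposition, and your handling of the indexing (a length-$i$ trace has input pairs $0,\dots,i-1$ and reaches $\pnt{s_i}$) matches the paper's conventions.
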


\begin{proof}
Assume that $E$ is not isolated. 
Then there is an initialized  trace $E' =  \tr{s'}{x'}{0}{m}$ such that
\begin{itemize}
\item $E'$ is allowed by $C$
\item \pnt{s_i}=\pnt{s'_i}, for some $i$ such that $1 \leq i \leq m$.
\end{itemize}
The existence of such a trace contradicts Lemma~\ref{lemma:trace}.
\end{proof}

%
%

\begin{proposition}
Let $\xi$ be a system with property $P$. Let $C(S,X)$ be a clause such
that \Impl{\overline{C}}{I}. Assume that $C$ does not exclude any
counterexample of length at most $n$ isolated with respect to $C$.
Then $C$ is a $P^n$-equivalent clause.
\end{proposition}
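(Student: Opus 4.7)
The plan is to prove the statement in its constructive form: I will show that whenever $\xi$ admits a counterexample of length at most $n$, one can always exhibit such a counterexample that is allowed by $C$. Combined with the trivial direction (any trace allowed by $C$ is, in particular, a trace of $\xi$), this gives exactly the biconditional demanded by Definition~6 of $P^n$-equivalence.

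First I would fix an arbitrary counterexample $E = \tr{s}{x}{0}{m}$ of $\xi$ with $m+1 \leq n$, leading to the bad state $\pnt{s_{m+1}}$. If $E$ is allowed by $C$ there is nothing to do. Otherwise $E$ is excluded by $C$, and by the hypothesis of the proposition $E$ cannot be isolated, so by Definition~7 it is public: there exist an index $i$ with $1 \leq i \leq m$ and an initialized trace $E' = (\pnt{s'_0},\pnt{x'_0}),\dots,(\pnt{s'_{i-1}},\pnt{x'_{i-1}})$ of length $i$, allowed by $C$, whose reached state equals $\pnt{s_i}$.

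Second, I would splice $E'$ with the tail of $E$ starting at position $i$, producing $E^{\sharp} = (\pnt{s'_0},\pnt{x'_0}),\dots,(\pnt{s'_{i-1}},\pnt{x'_{i-1}}),(\pnt{s_i},\pnt{x_i}),\dots,(\pnt{s_m},\pnt{x_m})$. Because the two pieces meet at the shared state $\pnt{s_i}$, $E^{\sharp}$ is a legal initialized trace of $\xi$ that reaches the same bad state $\pnt{s_{m+1}}$, has the same length $m+1 \leq n$ as $E$, and is allowed by $C$ since its first input pair $(\pnt{s'_0},\pnt{x'_0})$ satisfies $C$ (because $E'$ is allowed).

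The only genuine subtlety, and the place where the main work of the proof goes, is that the states $\pnt{s'_0},\dots,\pnt{s'_{i-1}}$ inherited from $E'$ are not a priori good, so $E^{\sharp}$ might fail to satisfy the ``all internal states good'' clause of Definition~3. I would resolve this by a case split: if all those $\pnt{s'_j}$ are good, then $E^{\sharp}$ is itself a counterexample allowed by $C$ and we are done. Otherwise let $j^{*}$ be the smallest index with $\pnt{s'_{j^{*}}}$ bad; the standing assumption $\Impl{I}{P}$ introduced in Section~\ref{sec:main_idea} forces $j^{*} \geq 1$, and then the prefix $(\pnt{s'_0},\pnt{x'_0}),\dots,(\pnt{s'_{j^{*}-1}},\pnt{x'_{j^{*}-1}})$ of $E'$ is itself a counterexample of length $j^{*} \leq i \leq n$, allowed by $C$ because it shares its first input pair with the allowed trace $E'$. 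In every branch we have exhibited an allowed counterexample of length at most $n$, which is exactly what Definition~6 requires.
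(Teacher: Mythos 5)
Your proof is correct and follows essentially the same route as the paper's: it uses the hypothesis to conclude that an excluded counterexample must be public, then stitches the allowed trace reaching the shared state \pnt{s_i} onto the tail of the excluded counterexample to obtain an allowed counterexample of length at most $n$ (the paper merely phrases this as a proof by contradiction, but the splicing argument is the same). Your extra case split handling the possibility that the spliced-in prefix passes through a bad state addresses a detail the paper's proof silently glosses over, so it is a refinement rather than a genuinely different approach.
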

\begin{proof}
Assume the contrary i.e. every counterexample of length at most $n$ is
excluded by $C$ and so $C$ is not $P^n$-equivalent. Let
$E$=\tr{s}{x}{0}{m} be a counterexample of length $m \leq n$ excluded
by $C$.  By assumption, $E$ is not isolated with respect to $C$. Then
there is an initialized trace $E'$ equal to \tr{s'}{x'}{0}{k}, $k \leq
m$ such that
\begin{itemize}
\item $E'$ is allowed by $C$
\item \pnt{s'_k} = \pnt{s_k}.
\end{itemize}
Let $E''$ be a sequence of input pairs
\tr{s'}{x'}{0}{k},\tr{s}{x}{k+1}{m}. Since $E''$ is obtained by
stitching together two traces and \pnt{s'_k} = \pnt{s_k}, $E''$ is a
trace.  Since \pnt{s'_0} is an initial state, $E''$ is an initialized
trace. Since $\xi$ transitions to a bad state under
input \IP{s_m}{x_m} $E''$ is a counterexample.  Since \IP{s'_0}{x'_0}
satisfies $C$, $E''$ is allowed by $C$.  So $C$ does not exclude all
counterexamples of length at most $n$ and we have a contradiction.
\end{proof}

\section*{Propositions of Section~\ref{sec:bug_hunting}: Bug Hunting By CRR}

%
%
\begin{proposition}
Let $\xi$ be a state transition system with property $P$. Let $C(S,X)$
be a non-empty clause such that \Impl{\overline{C}}{I}.  Let $H_0$
denote formula equal to $C$. Let formulas $H_1,\dots,H_n$ be obtained
recursively as follows.  Let $\Phi_0$ denote formula equal to $I$.
Let $\Phi_i,~~0 < i \leq n$ denote formula $I \wedge H_0 \wedge
T_0 \wedge \dots \wedge H_{i-1} \wedge T_{i-1}$. Here $T_j =
T(S_j,X_j,S_{j+1})$ where $S_j$ and $X_j$ are state and input
variables of $j$-th time frame.  Formula $H_{i+1}$ is obtained by
taking $H_i$ out of the scope of quantifiers in
formula \prob{W}{H_i \wedge T_i \wedge \Phi_i} where $W = S_0 \cup
X_0 \cup \dots \cup S_i \cup X_i$.  That is
$H_{i+1} \wedge \prob{W}{T_i \wedge \Phi_i} \equiv \prob{W}{H_i \wedge
T_i \wedge \Phi_i}$.  Then formulas $H_1,\dots,H_n$ are range
reduction formulas.
\end{proposition}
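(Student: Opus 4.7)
The plan is to proceed by induction on $i$, taking as the inductive hypothesis that each previously generated formula $H_1,\ldots,H_i$ satisfies Definition~\ref{def:rr_formulas}, i.e.\ excludes exactly the states that are reachable in $j$ transitions but only by traces that falsify $C$. At each step I would reduce the PQE equivalence that defines $H_{i+1}$ to an instance of Proposition~\ref{prop:range_red}, so that noise-freeness pins down $H_{i+1}$ as a range reduction formula.

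For the base case ($i=0$), the defining equivalence is $H_1 \wedge \exists S_0 X_0 [T_0 \wedge I] \equiv \exists S_0 X_0 [H_0 \wedge T_0 \wedge I]$ with $H_0=C$. I would view $T_0 \wedge I$ as the CNF describing a one-time-frame ``combinational circuit'' whose inputs are $(S_0,X_0)$ restricted by $I$ and whose output is $S_1$, and apply Proposition~\ref{prop:range_red} with $C$ as the input-side clause being excluded. The noise-free solution produced then matches exactly the range reduction at time 1.

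For the inductive step, I would look at the two quantified formulas appearing in the equation for $H_{i+1}$ and identify, via two supporting lemmas, the sets of $S_{i+1}$-assignments on which they evaluate to 1. The first lemma asserts that $\exists W [T_i \wedge \Phi_i](\pnt{s_{i+1}})=1$ iff $\pnt{s_{i+1}}$ is reached in $i+1$ transitions by a trace whose initial input pair satisfies $C$ and whose state at each intermediate time $j<i$ is not excluded by $H_j$; the second asserts that $\exists W [H_i \wedge T_i \wedge \Phi_i](\pnt{s_{i+1}})=1$ is the analogous statement but additionally requires $H_i(s_i)=1$. Applying Proposition~\ref{prop:range_red} with $H_i$ in the role of the excluded input-side constraint and with $T_i \wedge \Phi_i$ as the ``composite circuit'' then characterizes noise-free $H_{i+1}$ as the assignments to $S_{i+1}$ that are in the range of the composite but are pushed out by enforcing $H_i=1$. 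Matching this characterization against Definition~\ref{def:rr_formulas} at time $i+1$, using the inductive hypothesis on $H_i$, yields the desired conclusion.

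The main obstacle will be the two supporting lemmas, which require checking that the intermediate $H_j$ factors accumulated inside $\Phi_i$ do not spuriously discard traces relevant to reachability at time $i+1$. Concretely, I expect to argue that for any trace one wishes to count, the assumption that $H_j$ is the noise-free range reduction at time $j$ means $H_j(s_j)=1$ exactly when $s_j$ is either unreachable or reachable by some $C$-allowed prefix, and so enforcing $H_j$ inside $\Phi_i$ only filters out traces that pass through states which Definition~\ref{def:rr_formulas} would already mark as time-$j$ range-reduction states. The delicate point will be verifying that this filtering is compatible with the noise-free characterization given by Proposition~\ref{prop:range_red}, so that the resulting $H_{i+1}$ really is the range reduction at time $i+1$ and not merely some valid but coarser solution of the PQE problem.
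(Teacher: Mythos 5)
Your overall strategy (induction on $i$, using noise-freeness and the characterization of Proposition~\ref{prop:range_red} to pin down $H_{i+1}$) has the same skeleton as the paper's proof, and your base case is fine. The gap is in the inductive step, exactly at the point you flag as delicate. Your two supporting lemmas are correct as you state them, but combined with the inductive hypothesis they do not ``match'' Definition~\ref{def:rr_formulas} --- they show the construction is degenerate. For $i \geq 1$ the background formula $\Phi_i$ contains $H_0 = C$, so every assignment satisfying $T_i \wedge \Phi_i$ encodes an initialized trace \emph{allowed} by $C$; by the inductive hypothesis the state \pnt{s_i} it reaches at time $i$ is then reachable by an allowed prefix, hence $H_i(\pnt{s_i})=1$. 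In other words $T_i \wedge \Phi_i$ already implies $H_i$, the two projections \prob{W}{T_i \wedge \Phi_i} and \prob{W}{H_i \wedge T_i \wedge \Phi_i} coincide, and the noise-free solution of that PQE instance is the empty (tautologous) formula: no assignment to $S_{i+1}$ is forced to $0$. That cannot equal the range reduction formula of Definition~\ref{def:rr_formulas} whenever some state is reachable in $i+1$ transitions only by excluded traces --- which is precisely the interesting case (it already fails for $H_2$ in a system with two initial states where $C$ cuts off one of them). So your final ``matching'' step would not go through as claimed.

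For comparison, the paper's own proof avoids tripping over this only because it establishes just one direction of the iff: case~A shows $H_{n+1}$ does not exclude unreachable states, case~B shows it does not exclude states reachable by allowed traces, but the direction that states reachable \emph{only} by excluded traces are in fact excluded is never argued --- and that is the direction broken by the $H_0=C$ conjunct inside $\Phi_i$. Your two-sided plan is the more honest one; carried out faithfully it would force the fix, namely running the recursion with $\Phi_i = I \wedge T_0 \wedge \dots \wedge T_{i-1}$ (dropping the $H_j$ conjuncts, in particular $C$, from the quantified background), under which your splicing argument does yield the claimed characterization. A secondary, smaller point: Proposition~\ref{prop:range_red} is stated for a single clause over the inputs of a circuit whose CNF is a Tseitsin encoding, so to invoke it you must restate it for a CNF $H_i$ over the intermediate variables $S_i$ and a background formula that is not a pure circuit encoding; the argument does generalize, but you should say so explicitly rather than cite the proposition as is.
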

\begin{proof}
Let us prove this proposition by induction on $i$. This proposition is
vacuously true for $i=0$. Assume that it holds for $i=0,\dots,n$. Let
us show that then this proposition holds for $n+1$.  That is one needs
to show that $H_{n+1}$ is a range reduction formula and hence
$H_{n+1}(\pnt{s})=0$ iff \pnt{s} is reachable in $n+1$ transitions only
by traces excluded by $C$.  Assume that $H_{n+1}$ is not a range
reduction formula.  Then one needs to consider the two cases below.

A) $H_{n+1}(\pnt{s})=0$ and \pnt{s} is not reachable by any
initialized trace of length $n+1$. This means that \pnt{s} cannot be
extended to satisfy formula $I \wedge T_0 \dots \wedge T_n$.
Hence \pnt{s} cannot be extended to satisfy formula $\Phi_n \wedge
T_n \wedge H_n$. Then the clause of maximal length falsified
by \pnt{s} is implied by $\Phi_n$. This means that $H_{n+1}$ is a
``noisy'' solution of the PQE problem and hence cannot be obtained by
a noise-free PQE solver. So we have a contradiction.

B) The set of initialized traces of length $n+1$ reaching
state \pnt{s} is not empty but at least one trace of this set is
allowed by $C$. Let $E$ be such a trace. The fact that $E$
reaches \pnt{s} means that $E$ satisfies formula $I \wedge
T_0 \dots \wedge T_n$. Since $E$ is a trace allowed by $C$ it also
satisfies $C$. Moreover, $E$ has to satisfy all the formulas $H_i$,
$i=1,\dots,n$. Indeed, if $E$ falsifies $H_i$ then there is a
initialized trace of length $i$ that is allowed by $C$ and that
reaches a state excluded by $H_i$. This means that $H_i$ is not a
range reduction formula.  So $E$ satisfies $H_i$,$i=1,\dots,n$ and
hence formula $H_n \wedge T_n \wedge \Phi_n$ is satisfied by $E$. This
means that formula $H_{n+1}$ is not implied by $H_n \wedge
T_n \wedge \Phi_n$. Hence $H_{n+1} \wedge \prob{W}{T_n \wedge \Phi_n}$
is not equivalent to
\prob{W}{H_n \wedge T_n \wedge \Phi_n} and we have a contradiction.
\end{proof}
%
%
\begin{lemma}
\label{lemma:impl}
Let $F',F'',H',H'',G$ be  CNF formulas such that
\begin{itemize}
\item $H' \wedge \prob{X}{G} \equiv \prob{X}{F' \wedge G}$
\item $H'' \wedge \prob{X}{G} \equiv \prob{X}{F'' \wedge G}$
\item \Impl{F'}{F''}
\end{itemize}
Let $H'$,$H''$ be obtained by a noise-free PQE solver.
Then \Impl{H'}{H''} holds.
\end{lemma}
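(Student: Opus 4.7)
The plan is to reduce the lemma to a pointwise statement about $H'$ and $H''$ evaluated on $Y$-assignments, which then follows from the noise-free hypothesis plus a one-line witness-lifting step. First I would record a pointwise characterization of any noise-free PQE solution $H$ of \prob{X}{F \wedge G}: on any $Y$-assignment $\pnt{y}$ with $\prob{X}{G}(\pnt{y}) = 1$ the equivalence $H \wedge \prob{X}{G} \equiv \prob{X}{F \wedge G}$ forces $H(\pnt{y}) = \prob{X}{F \wedge G}(\pnt{y})$, while on any $\pnt{y}$ with $\prob{X}{G}(\pnt{y}) = 0$ noise-freeness forces $H(\pnt{y}) = 1$. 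The latter uses the same reasoning already appearing in the proof of Proposition~\ref{prop:range_red}: if some clause $C$ of $H$ were falsified by $\pnt{y}$, then $C$ could be extended with the appropriate literals of $Y \setminus \V{C}$ to a longer clause $C^*$ satisfied by every $\pnt{y'} \neq \pnt{y}$, and $\prob{X}{G}(\pnt{y}) = 0$ would make $C^*$ implied by $G$, contradicting noise-freeness of $C$.

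With this characterization in hand, the implication \Impl{H'}{H''} is established by case analysis on an arbitrary $\pnt{y}$. When $\prob{X}{G}(\pnt{y}) = 0$ we get $H''(\pnt{y}) = 1$ for free. When $\prob{X}{G}(\pnt{y}) = 1$ and $H'(\pnt{y}) = 1$, I would pick a witness $\pnt{x}$ with $F'(\pnt{x},\pnt{y}) = G(\pnt{x},\pnt{y}) = 1$; the hypothesis \Impl{F'}{F''} then gives $F''(\pnt{x},\pnt{y}) = 1$, so the same $\pnt{x}$ witnesses $\prob{X}{F'' \wedge G}(\pnt{y}) = 1$, which by the characterization equals $H''(\pnt{y})$.

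The only slightly delicate step is the first one, since the paper's definition of noise-freeness is a syntactic (clause-level) condition, whereas we need a semantic (pointwise) consequence. I would not redo that argument from scratch; since the proof of Proposition~\ref{prop:range_red} already carries it out for precisely this setting, I would simply cite it. The remaining witness-lifting from $F'$ to $F''$ is entirely routine, so the main conceptual obstacle reduces to recognizing that the clause-level notion of noise-freeness pins down $H$ uniquely on $Y$-assignments outside the range of \prob{X}{G}.
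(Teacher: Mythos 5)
Your proposal is correct and uses exactly the same ingredients as the paper's proof: the defining equivalences, lifting the witness through the existential quantifier via \Impl{F'}{F''}, and the fact that noise-freeness forces $H''(\pnt{y})=1$ whenever \prob{X}{G} is $0$ at \pnt{y} (the paper derives this inline via the ``longest clause falsified by \pnt{y}'' argument rather than citing Proposition~\ref{prop:range_red}, whose statement is confined to the circuit-range setting, but you spell out the same general clause-extension argument yourself, so this is fine). The only difference is organizational --- you argue directly by a case split on the value of \prob{X}{G} at \pnt{y}, whereas the paper argues by contradiction from $H'(\pnt{y})=1$, $H''(\pnt{y})=0$ --- which is not a genuinely different route.
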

\begin{proof}
Let $Y$ denote the set of free variables. Assume the contrary
i.e. \Nmpl{H'}{H''}.  Then there is a complete assignment \pnt{y} to
$Y$ such that $H'(\pnt{y})=1$ and $H''(\pnt{y})=0$. The latter means
that 
\begin{enumerate}
\item \prob{X}{F'' \wedge G}=0 in subspace \pnt{y} and so every
assignment (\pnt{x},\pnt{y}) falsifies $F'' \wedge G$
\item Since $H''$ is obtained by a noise-free PQE solver,
\Nmpl{G}{C} where $C$ is the longest clause falsified by \pnt{y}.
So there is an assignment (\pnt{x},\pnt{y}) satisfying $G$.
\end{enumerate}

The fact that every assignment (\pnt{x},\pnt{y}) falsifies $F'' \wedge
G$ and that \Impl{F'}{F''} entails that every assignment
(\pnt{x},\pnt{y}) falsifies $F' \wedge G$ as
well. So \prob{X}{F' \wedge G}=0 in subspace \pnt{y}. This means that
$H' \wedge \prob{X}{G} = 0$ in subspace \pnt{y} as well. The fact that
there is an assignment (\pnt{x},\pnt{y}) satisfying $G$ and $H'$
depends only on variables of $Y$ implies that $H'(\pnt{y})=0$.
So we have a contradiction.
\end{proof}
%
%
\begin{proposition}
Let $H^*_i, i=0,\dots,n$ be formulas obtained as described in
Proposition~\ref{prop:noise_free_pqe} with only one exception. A \ti{noisy}
PQE-solver is used to obtain $H^*_{i+1}$ by taking $H^*_i$ out of the
scope of quantifiers in \prob{W}{H^*_i \wedge T_i \wedge \Phi^*_i}.
Here $\Phi^*_0 = I$, $H^*_0=C$ and $\Phi^*_i = I \wedge H^*_0 \wedge
T_0 \wedge \dots \wedge H^*_{i-1} \wedge T_{i-1}$ for $i < 0 \leq n$.
Then \Impl{H^*_i}{H_i} holds where $H_i,i=1,\dots,n$ are range
reduction formulas.
\end{proposition}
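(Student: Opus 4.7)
The plan is to proceed by induction on $i$. The base case $i = 0$ is immediate, since $H^*_0 = H_0 = C$.

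For the inductive step, assume $H^*_j \Rightarrow H_j$ for every $j \leq i$. Because $\Phi^*_i$ and $\Phi_i$ are built by conjoining $I$, the $T_j$, and the respective $H^*_j$ (resp.\ $H_j$) for $j < i$, the hypothesis gives $\Phi^*_i \Rightarrow \Phi_i$. Combined with $H^*_i \Rightarrow H_i$ and the shared $T_i$, this produces $H^*_i \wedge T_i \wedge \Phi^*_i \Rightarrow H_i \wedge T_i \wedge \Phi_i$. Existentially quantifying $W$ preserves implication, and substituting via the PQE equations defining $H^*_{i+1}$ (from the noisy solver) and $H_{i+1}$ (from the noise-free solver) yields
\[
H^*_{i+1} \wedge \prob{W}{T_i \wedge \Phi^*_i} \;\Rightarrow\; H_{i+1} \wedge \prob{W}{T_i \wedge \Phi_i}.
\]
On every $\pnt{s_{i+1}}$ at which \prob{W}{T_i \wedge \Phi^*_i} evaluates to $1$, this already gives $H^*_{i+1}(\pnt{s_{i+1}}) = 1 \Rightarrow H_{i+1}(\pnt{s_{i+1}}) = 1$.

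To finish, I would emulate the argument used in the proof of Lemma~\ref{lemma:impl}. Fix $\pnt{s_{i+1}}$ with $H^*_{i+1}(\pnt{s_{i+1}}) = 1$ at which \prob{W}{T_i \wedge \Phi^*_i} evaluates to $0$; the noisy PQE equation no longer constrains $H^*_{i+1}$ at $\pnt{s_{i+1}}$. Here I would invoke noise-freeness of $H_{i+1}$ and argue that the longest clause over $S_{i+1}$ falsified by $\pnt{s_{i+1}}$ cannot be implied by $T_i \wedge \Phi_i$, so $H_{i+1}$ contains no clause falsified by $\pnt{s_{i+1}}$ and therefore $H_{i+1}(\pnt{s_{i+1}}) = 1$.

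The hard part will be precisely this last step. Lemma~\ref{lemma:impl} does not apply off the shelf, since it presupposes that both $H'$ and $H''$ are noise-free and that both PQE instances share the same $G$, whereas here $H^*_{i+1}$ is noisy and the two PQE instances differ in $F$ (namely $H^*_i$ vs.\ $H_i$) as well as in $G$ (namely $T_i \wedge \Phi^*_i$ vs.\ $T_i \wedge \Phi_i$). I would close the gap by either strengthening Lemma~\ref{lemma:impl} so that it admits a noisy $H'$ together with a pair of $G$'s related by implication, or by decomposing $H^*_{i+1}$ into its noise-free core plus implied noise clauses and treating the two parts separately, in either case tracking inductively how the noise introduced at successive time frames propagates through the recursively built $\Phi^*_i$.
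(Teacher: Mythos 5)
Your induction setup and the first half of the step are sound: the inductive hypothesis gives $\Phi^*_i \rightarrow \Phi_i$, hence $H^*_i \wedge T_i \wedge \Phi^*_i \rightarrow H_i \wedge T_i \wedge \Phi_i$, and after quantification the two defining equivalences do force $H_{i+1}(\pnt{s_{i+1}})=1$ whenever $H^*_{i+1}(\pnt{s_{i+1}})=1$ and \prob{W}{T_i \wedge \Phi^*_i} evaluates to 1 at \pnt{s_{i+1}}. But the residual case is not closed, and the sketch you give for it runs in the wrong direction. Noise-freeness of $H_{i+1}$ forces $H_{i+1}(\pnt{s_{i+1}})=1$ precisely when the longest clause falsified by \pnt{s_{i+1}} \emph{is} implied by $T_i \wedge \Phi_i$, i.e., when \prob{W}{T_i \wedge \Phi_i} vanishes at \pnt{s_{i+1}}; if that clause is \emph{not} implied, the defining equivalence pins $H_{i+1}(\pnt{s_{i+1}})$ to the value of \prob{W}{H_i \wedge T_i \wedge \Phi_i}, which can perfectly well be 0. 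Moreover, since $\Phi^*_i \rightarrow \Phi_i$ only yields \Impl{\prob{W}{T_i \wedge \Phi^*_i}}{\prob{W}{T_i \wedge \Phi_i}}, your residual case contains the subcase where \prob{W}{T_i \wedge \Phi^*_i} is 0 but \prob{W}{T_i \wedge \Phi_i} is 1 at \pnt{s_{i+1}}; there neither your implication chain nor noise-freeness of $H_{i+1}$ decides anything. You correctly identify this as the hard part and propose two repair strategies, but you execute neither, so the proof is incomplete exactly where it matters.

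The idea you are missing is the paper's intermediate formula: let $Q_{i+1}$ be the \emph{noise-free} result of taking $H^*_i$ (not $H_i$) out of the scope of quantifiers in \prob{W}{H^*_i \wedge T_i \wedge \Phi_i}. Then Proposition~\ref{prop:range_red} (second claim) gives \Impl{H^*_{i+1}}{Q_{i+1}}, a noisy solution implying the noise-free one, and Lemma~\ref{lemma:impl} applied with $F'=H^*_i$, $F''=H_i$ and the common $G = T_i \wedge \Phi_i$ gives \Impl{Q_{i+1}}{H_{i+1}}; chaining yields \Impl{H^*_{i+1}}{H_{i+1}}. This factors the comparison into exactly the two pieces the existing lemmas cover, noisy versus noise-free for the same left-hand formula and noise-free versus noise-free over the same $G$, and no case split on \prob{W}{T_i \wedge \Phi^*_i} is needed. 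Your second suggested repair (separating $H^*_{i+1}$ into a noise-free core plus implied noise) is this argument in embryo, but it is not carried out. You are right, incidentally, that the discrepancy between $\Phi^*_i$ and $\Phi_i$ is a genuine wrinkle: the paper's own application of Proposition~\ref{prop:range_red} quietly treats $H^*_{i+1}$ as a noisy solution of the instance built on $\Phi_i$ rather than $\Phi^*_i$. Flagging that, however, does not substitute for closing your case.
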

\begin{proof}
We prove this proposition by induction on $i$. 
\Impl{H^*_0}{H_0} holds because $H^*_0 = H_0 = I$.
Now we prove that \Impl{H^*_i}{H_i}, $i \geq 0$ entails that
\Impl{H^*_{i+1}}{H_{i+1}}. Denote by $Q_{i+1}$ a noise-free formula
obtained by taking $H^*_i$ out of the scope of quantifiers
in \prob{W}{H^*_i \wedge T_i \wedge \Phi_i}. From
Lemma~\ref{lemma:impl} it follows that \Impl{Q_{i+1}}{H_{i+1}}. On the
other hand, from Proposition~\ref{prop:range_red} it follows that
\Impl{H^*_{i+1}}{Q_{i+1}}. Hence \Impl{H^*_{i+1}}{H_{i+1}}.
\end{proof}

\section*{Propositions of Section~\ref{sec:p_equiv_clauses}: Generation Of $P$-Equivalent Clauses}
\begin{proposition}
Let $\xi$ be a system with property $P$. Let $C(S,X)$ be a clause such
that \Impl{\overline{C}}{I}. Let $H^*_1,\dots,H^*_n$ be approximate
range reduction formulas computed with respect to clause $C$ by a
noisy PQE solver.  Suppose that no formula $H^*_i$,~$i=1,\dots,n$
excludes a reachable bad state \pnt{s}. Then clause $C$ is
$P^n$-equivalent.
\end{proposition}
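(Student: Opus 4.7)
The plan is to argue by contrapositive: assume that $C$ is not $P^n$-equivalent, and exhibit an index $1 \leq i \leq n$ together with a reachable bad state excluded by $H^*_i$. First I unpack the hypothesis. Because the set of initialized traces allowed by $C$ is a subset of all initialized traces of $\xi$, the only way $C$ can fail to be $P^n$-equivalent is for the original system to possess a counterexample of length at most $n$ while \emph{every} such counterexample has its initial input pair falsifying $C$. Fix any counterexample $E$ of length $t \leq n$ (so $E$ is excluded by $C$) and let $\pnt{s^*}$ denote the bad state that $E$ reaches in $t$ transitions.

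The main step is to establish that every length-$t$ initialized trace reaching $\pnt{s^*}$ is excluded by $C$. Suppose toward contradiction that some initialized trace $E'$ of length $t$ allowed by $C$ reaches $\pnt{s^*}$. The initial state of $E'$ satisfies $I$ and hence $P$ (since \Impl{I}{P}), whereas $\pnt{s^*}$ is bad, so there is a minimal prefix of $E'$ whose reached state is bad. By minimality every intermediate state along that prefix is good, so the prefix is itself a counterexample. Its initial input pair agrees with that of $E'$ and therefore satisfies $C$, so the prefix is a counterexample of length at most $t \leq n$ that is allowed by $C$. This contradicts the consequence derived in the previous paragraph that every counterexample of length at most $n$ is excluded by $C$.

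Having shown that $\pnt{s^*}$ is reachable in $t$ transitions and that every length-$t$ trace reaching it is excluded by $C$, Definition~\ref{def:rr_formulas} yields $H_t(\pnt{s^*}) = 0$ for the noise-free range reduction formula $H_t$. Proposition~\ref{prop:noisy_pqe} supplies \Impl{H^*_t}{H_t}, whose contrapositive gives $H_t(\pnt{s^*}) = 0 \Rightarrow H^*_t(\pnt{s^*}) = 0$, so $H^*_t$ excludes the reachable bad state $\pnt{s^*}$ for some $1 \leq t \leq n$, contradicting the hypothesis of the proposition. The most delicate point is the minimal-prefix argument in the second paragraph: it genuinely relies on the strong consequence that \emph{every} counterexample of length at most $n$ is excluded by $C$, not merely the single excluded counterexample $E$ we started from; this is what promotes ``some excluded counterexample exists'' to ``the bad state reached by it is only reachable along excluded traces,'' which in turn is what lets the (noise-free) range reduction formula witness the bug.
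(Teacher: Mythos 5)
Your proof is correct, but it takes a genuinely different route from the paper's. The paper's argument is a short reduction: from Proposition~\ref{prop:noisy_pqe} it gets \Impl{H^*_i}{H_i}, concludes that no exact range reduction formula $H_i$ excludes a bad state, infers that $C$ therefore excludes no isolated counterexample, and closes by invoking Proposition~\ref{prop:p_equiv_clause}. You bypass the isolated/public classification entirely: negating $P^n$-equivalence, you fix an excluded counterexample reaching a bad state $\pnt{s^*}$ in $t$ transitions, and your minimal-prefix argument shows directly that every length-$t$ initialized trace reaching $\pnt{s^*}$ is excluded by $C$, so Definition~\ref{def:rr_formulas} gives $H_t(\pnt{s^*})=0$; the only ingredient you share with the paper is the transfer \Impl{H^*_t}{H_t} from Proposition~\ref{prop:noisy_pqe}. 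Your prefix-truncation step is essentially the same combinatorial move as the trace-stitching in the paper's proof of Proposition~\ref{prop:p_equiv_clause}, so the two arguments have the same core. What your version buys is self-containedness and arguably a bit more precision: the paper's middle step (``$H_i$ excludes no bad state, hence $C$ excludes no isolated counterexample'') leans on Definition~\ref{def:isol_trace}, which literally constrains only the states $\pnt{s_i}$, $0<i\leq m$, of the trace and not the bad state it reaches, whereas you apply Definition~\ref{def:rr_formulas} directly to the reached bad state. What it costs is re-deriving, inside this proposition, a fact the paper has already packaged once and for all as Proposition~\ref{prop:p_equiv_clause}.
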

\begin{proof}
From Proposition~\ref{prop:noisy_pqe} it follows
that \Impl{H^*_i}{H_i} where $H_i$ is a range reduction formula.  So
that fact that $H^*_i$ does not exclude a bad reachable state implies
that $H_i$ does not exclude a bad state. This means that clause $C$
does not exclude an isolate counterexample of length at most $n$. Then
Proposition~\ref{prop:p_equiv_clause} entails that $C$ is
$P^n$-equivalent.
\end{proof}

%
%
\begin{proposition}
Let $\xi$ be a system with property $P$. Let $C(S,X)$ be a clause such
that \Impl{\overline{C}}{I}. Let $H^*_1,\dots,H^*_i$ be approximate
range reduction formulas computed with respect to clause $C$ by a
noisy PQE solver. Suppose that every bad state excluded by $H^*_j$, $1
\leq j < i$ is unreachable. Suppose that every state (bad or good)
excluded by $H^*_i$ is unreachable. Then clause $C$ is
$P^n$-equivalent for any $n > 0$.
\end{proposition}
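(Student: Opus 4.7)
The plan is to translate the hypotheses on the noisy formulas $H^*_j$ into properties of the exact range reduction formulas $H_j$ from Definition \ref{def:rr_formulas}, and then argue directly. By Proposition \ref{prop:noisy_pqe}, \Impl{H^*_j}{H_j}, so every state excluded by $H_j$ is also excluded by $H^*_j$. Since Definition \ref{def:rr_formulas} forces $H_j(\pnt{s})=0$ to imply \pnt{s} is reachable in $j$ transitions, two facts drop out. First, for $j < i$, every bad state excluded by $H_j$ must be unreachable (inherited from the hypothesis on $H^*_j$) and yet reachable (by the definition), so $H_j$ excludes no bad state at all. Second, applying the same logic to $H_i$ with the stronger hypothesis (covering every excluded state, bad or good) yields $H_i \equiv 1$: nothing at all is excluded.

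Next, I would propagate $H_i \equiv 1$ forward by induction on $k \geq i$, showing that $H_k \equiv 1$ implies $H_{k+1} \equiv 1$. Given any state \pnt{s} with $H_{k+1}(\pnt{s})=0$, pick a length-$(k+1)$ trace $E = \tr{s}{x}{0}{k}$ that reaches \pnt{s} through the last input pair \IP{s_k}{x_k}. Its prefix of length $k$ reaches \pnt{s_k}, so \pnt{s_k} is reachable in $k$ transitions; by $H_k \equiv 1$ there exists a length-$k$ trace $E'$ allowed by $C$ reaching \pnt{s_k}. Appending \IP{s_k}{x_k} to $E'$ yields a length-$(k+1)$ trace that still begins with an input pair satisfying $C$ (because $C$ constrains only the initial time frame) and that reaches \pnt{s}, contradicting $H_{k+1}(\pnt{s})=0$.

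Combining the two developments, for every $k \geq 1$ the formula $H_k$ excludes no bad state: either $k < i$ and the first fact applies, or $k \geq i$ and $H_k \equiv 1$. To finish, I would argue by contradiction. Suppose $C$ were not $P^n$-equivalent for some $n > 0$: then $P$ fails and $C$ excludes every counterexample of length at most $n$. Pick one such counterexample $E = \tr{s}{x}{0}{m}$ reaching the bad state \pnt{s_{m+1}} with $m+1 \leq n$. Since \pnt{s_{m+1}} is reachable in $m+1$ transitions yet $H_{m+1}$ does not exclude it, Definition \ref{def:rr_formulas} guarantees a length-$(m+1)$ trace $F$ allowed by $C$ reaching \pnt{s_{m+1}}. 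If every state appearing in $F$ is good, $F$ is itself an allowed counterexample of length at most $n$; otherwise, truncating $F$ at its earliest bad intermediate state (whose index is at least $1$ because \Impl{I}{P}) produces an even shorter allowed counterexample. Either outcome contradicts the assumption that every such counterexample is excluded by $C$. The main subtlety throughout will be index bookkeeping — matching ``length $m+1$'' with the $\tr{s}{x}{0}{m}$ notation and verifying that appended or truncated traces remain initialized, allowed, and of the correct length — but both the stitching and the truncation are underwritten by the same simple observation that $C$ constrains only the initial input pair.
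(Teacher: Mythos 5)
Your proof is correct and follows essentially the same route as the paper: use Proposition~\ref{prop:noisy_pqe} to transfer the hypotheses from the noisy formulas $H^*_j$ to the exact range reduction formulas $H_j$, conclude that $H_j$ excludes no bad state for $j<i$ and that $H_i\equiv 1$, propagate the emptiness forward to all later time frames, and deduce $P^n$-equivalence. The only differences are in how sub-steps are discharged: you propagate $H_k\equiv 1$ semantically from Definition~\ref{def:rr_formulas} where the paper reads it off the PQE recursion, and you re-derive the final stitch-and-truncate argument inline instead of routing it through Proposition~\ref{prop:p_equiv_clause} as the paper implicitly does.
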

\begin{proof}
From Proposition~\ref{prop:noisy_pqe} it follows
that \Impl{H^*_j}{H_j} where $H_j$ is a range reduction formula.  So
that fact that $H^*_j$, $1 \leq j < i$ does not exclude a bad
reachable state implies that $H_j$ does not exclude a bad state.  The
fact that every state excluded by $H^*_i$ is unreachable means that
$H_i$ is empty i.e. $H_i \equiv 1$. Formula $H_{i+1}$ is obtained by
taking $H_i$ out of the scope of quantifiers in
formula \prob{W}{H_i \wedge T_i \wedge \Phi_i}. This means that
$H_{i+1} = \equiv 1$ and hence $H_{i+1}$ does not exclude any bad states
either. So no formula $H_i$, $i > 0$ excludes a bad state. Hence clause
$C$ is $P^n$ equivalent for any $n > 0$.
\end{proof}

\bibliographystyle{plain}
\bibliography{short_sat,local}

\begin{thebibliography}{1}

\bibitem{fmcad13}
E.Goldberg and P.Manolios.
\newblock Quantifier elimination via clause redundancy.
\newblock In {\em FMCAD-13}, pages 85--92, 2013.

\bibitem{tech_rep_pqe}
E.Goldberg and P.Manolios.
\newblock Partial quantifier elimination.
\newblock Technical Report arXiv:1407.4835 [cs.LO], 2014.

\bibitem{south_korea}
E.Goldberg and P.Manolios.
\newblock Software for quantifier elimination in propositional logic.
\newblock In {\em ICMS-2014,Seoul, South Korea, August 5-9}, pages 291--294,
  2014.

\bibitem{gen_ext_resol}
O.~Kullmann.
\newblock On a generalization of extended resolution.
\newblock {\em Discrete Applied Mathematics}, 34:73--95, 1997.

\bibitem{blocked_clause}
O.~Kullmann.
\newblock New methods for 3-sat decision and worst-case analysis.
\newblock {\em Theor. Comput. Sci.}, 223(1-2):1--72, 1999.

\end{thebibliography}
\end{document}